\newtheorem{theorem}{Theorem}[section]
\newtheorem{proposition}[theorem]{Proposition}
\newtheorem{lemma}[theorem]{Lemma}
\newtheorem{definition}[theorem]{Definition}
\newtheorem{remark}[theorem]{Remark}
\newcommand{\E}{\ensuremath{\mathbb{E}}}
\newcommand{\N}{\mathbb{N}}
\newcommand{\Z}{\mathbb{Z}}
\newcommand{\R}{\mathbb{R}}
\newcommand{\Q}{\mathbb{Q}}
\title{European Option Pricing of electricity under exponential functional of L\'evy processes with Price-Cap principle 
}
\author{\Large Martin Kegnenlezom \footnote{University of Yaound\'e I,
 Department of Mathematics, P.O. Box 812 Yaound\'e, Cameroon. Phone: (+237)699 299 181, Email: kegnenlezom@gmail.com }, Patrice Takam Soh\footnote{University of Yaound\'e I,
 Department of Mathematics, P.O. Box 812 Yaound\'e, Cameroon. Phone: (+237)699 299 181, Email: patricetakam@gmail.com }, Antoine-Marie Bogso\footnote{University of Yaound\'e I,
 Department of Mathematics, P.O. Box 812 Yaound\'e, Cameroon. Phone: (+237)652 620 452, Email: ambogso@gmail.com }, and \\ \Large Yves Emvudu Wono \footnote{University of Yaound\'e I,
 Department of Mathematics, P.O. Box 812 Yaound\'e, Cameroon. Phone: (+237)699 299 181, Email: yemvudu@yahoo.fr }}
\begin{document}
\maketitle

\begin{abstract}
	We propose a new model for electricity pricing based on the price cap principle. The particularity of   the model is that the asset price is an exponential functional of a jump L\'evy process. This model can capture both mean reversion and jumps which are observed in electricity market. It is shown that the value of an European option of this asset is the unique viscosity solution of a partial integro-differential equation (PIDE). A numerical approximation of this solution by the finite differences method is provided. The consistency, stability and convergence results of the scheme are given. Numerical simulations are performed under a smooth initial condition. \\ 
\text{}
\\
{\bf Keywords: } Mean reverting jump-diffusion, option pricing, price-cap, integro-differential equation, Viscosity solution\\  
\text{}\\
\textbf{subclass MSC: } 60H10, 60H35, 65M06.	

\end{abstract}

\section{Introduction}
\label{intro}
In finance, options are tools that help to guard against risks. But, it is difficult to know the value of an option before the maturity date since, for this end, one has to estimate the value of the underlying in future. In the early of 1970's, Black and Scholes \cite{bs}  brought a major contribution in the evaluation of options. In the case where the underlying is a share that does not pay dividends, they construct a risk-neutral portofolio that replicates the winning profile of an option, which allows to perform the theoretical value of a European option under a closed formula. In the case of the exponential L\'evy Black and Scholes model, this formula is derived from some classical results in discounting, statistics, stochastic and differential calculus. On the contrary, the valuation of options remains an open topic in the case of jump-diffusion processes due to the additional jump term that complicates calculation of option prices. This has been investigated by several authors as  \cite{bates}, \cite{chang} and \cite{scott}. One may distinguish two main approaches used by these authors. The first one, which relies on Fourier transform-based methods, has been introduced by Carr and Madan \cite{carrmadan} to price and analyse European option prices. Many other authors follow the same idea to evaluate option prices. Among them one find \cite{chiarella}, \cite{deng}, \cite{lewis} and \cite{wong}. The advantage of this approach relies on its high computational efficiency when the characteristic function is available. The idea here is to apply the direct discounted expectation method to evaluate the integral of the discounted payoff and risk neutral density function of the underlying process. Continuous and dicrete Fourier transforms were sucessfully applied to pricing options of three exponential L\'evy models: the classical Black-Scoles model (which is a continuous exponential L\'evy model), the Merton jump diffusion model (which is an exponential L\'evy model with finite arrival rate of jumps), and the variance Gamma model (which is an exponential L\'evy model with infinite arrival rate of jumps). However, for many other jump diffusion processes, it is not possible even by applying the L\'evy-Khinchin representation to determine an analytical expression of the characteristic function. This may be due to the complexity of diffusion processes or to the fact that two or more processes are combined in the same model. Therefore one expects to evaluate option prices with rather an approximation of the characteristic function. The debate is then on the choice of a good approximation. Indeed, when the characteristic function can not be expressed explicitly, the Fourier transform method requires two levels of approximation which increases the error level in option valuation. In order to reduce approximation errors, a second approach, namely the Feynman-Kac formula-based approach, has been introduced by some authors such as \cite{alvarez}, \cite{bal} and \cite{eka}. This approach relates the risk-neutral valuation formula to either the solution of a partial differential equation (PDE), when the model is a continuous exponential L\'evy one, or to the solution of a partial integro-differential equation (PIDE) in the case of continuous L\'evy process. The PDE, respectively PIDE obtained may be complex and its theoretical analysis requires new mathematical tools. However, after a numerical approximation step, the solution leads directly to the value of the option. So, we have one level of approximation. Nevertheless, theoretical analysis of the approximation (consistency, stability and convergence of the scheme) remains a challenge. Several finite difference schemes are used in the literature (see e.g. \cite{alvarez}, \cite{ekafinite} and \cite{chap}). Most of these works focus on pricing option when the underlying is driven by either a L\'evy processes or an exponential L\'evy process. Some of the main difficulties are related to the local integral term due to the fact that, on the one hand, the approximation of the risk-neutral density can often involve an infinite summation, and on the other hand, a local integral term requires a specific treatment at both theoretical and numerical levels. We may have other difficulties as the  smoothness of option prices and even the degeneracy of the diffusion coefficient. To overcome such difficulties, the notion of viscosity solution was introduced by Grandall and Lions \cite{lions} for PDEs and, more generally, for PIDEs (see e.g. \cite{alvarez} and \cite{bal}). Precisely, one can split the integro-differential operator into a local and a local part, and then treat the local term using an implicit step, and the local term using an explicit step. This idea was applied by Cont and Voltchkova \cite{ekafinite} to obtain a better approximation of option prices than the previous ones.

In this work, we apply the Feynman-Kac formula-based approach to evaluate option prices in a jump-diffusion model which represents the electricity spot price in a regulated market under the Price-Cap principle. Indeed, in this case, the characteristic function of the jump diffusion process is unavailable. In the next section, we present our jump diffusion model and state the main assumptions. We then taking advantage of the Markov property of the electricity spot price to prove in Section 3 that the European option price solves a  linear PIDE.
In Section 4, we apply an explicit-implicit scheme to compute numerically the viscosity solution of the PIDE provided in the previous section. The consistency, the stability and the convergence of this scheme are studied. The approximation method is similar to \cite{ekafinite} except that contrary to their work (i) the market price is the sum of the exponential L\'evy process and L\'evy process, (ii) the presence of this additional term (see equation \ref{2terms} ) has created an additional difficulty at all levels of the analysis since it still depends on the process, (iii) the parameters used in numerical simulations are derived from the Cameroonian context and (iv) it should also be noted that we worked in the case of time-dependent parameters. Finally, some numerical simulations are performed under a smooth initial condition. 
\section{Electricity Spot Price Model}
We suppose that the underlying asset risk-neutral dynamics is in the form 
\begin{eqnarray}\label{mod}
\mbox{d}S_t = \left(\alpha(t)S_t-\beta(t)\right)\mbox{d}t+ \sigma_t S_t\mbox{d}W_t + (J-1)S_t\mbox{d}q_t,
\end{eqnarray}
with $\alpha(t)=I(t)-EF(t)$ and $\beta(t):=CP(t)+SQ(t)-FU(t)$, where:
\begin{itemize}
	\item[]  $S_t$ represents the electricity spot prices process;
	\item[] $I(t)$ represents the inflation rate; 
	\item [] $EF(t)$ represents the efficiency factor; 
	\item[] $\sigma(t)$ represents the volatility;
	\item[] $W$ represents the standard Brownian motion;
	\item [] $CP(t)$ represents the subvention;
	\item [] $SQ(t)$ represents the service quality penalties, if any;
	\item [] $FU(t)$ represents the uncontrollable cost;
	\item []$J$ represents the proportional random jump size;
	\item []$\mbox{d}q$ a Poisson process such that;
	$\mbox{d}q_t=\left\{\begin{array}{rl}
	0 &\text{with probability}\,\,1- \ell\mbox{d}t\\
	1&\text{with probability} \, \, \ell\mbox{d}t
	\end{array}\right.$.
\end{itemize}
with the following assumptions:
\begin{itemize}
	\item []\textbf{Assumption 1:}  The proportional random jump size $J$ is log-Normally distributed, with $\mathbb{E}\left[J\right]=1$. Hence, $\ln J \sim \mathcal{N}\left(-\frac{\sigma_J^2}{2},\sigma_J^2\right)$. 
	\item []\textbf{Assumption 2:}  The random jump size $J$, the Poisson process $\mbox{d}q_t$ and the diffusion $\mbox{d}W_t$ are independent.
\end{itemize}
This model has been partly inspired from the economic principle price cap proposed by Littlechild in \cite{li} for regulated telecommunication market in UK. This principle has been recently applied to  electricity market \cite{js}. 

We will further assume that the electricity parameters $\alpha(.)$, $\beta(.)$ and $\sigma(.)$ are bounded.
It follows from the It\^{o} formula for jump diffusion process that the exact solution of \eqref{mod} is given by
\begin{equation}\label{2terms}
S_t=S_0e^{X_t}-\int_{0}^{t}\beta(s)e^{X_t-X_s}ds,
\end{equation}
where $$X_t=\int_0^t\alpha(s)-\dfrac{1}{2}\sigma(s)^2ds+\int_{0}^{t}\sigma(s)dW_s+\int_{0}^{t}\ln J dq_s.$$  
\section{Partial Integro-Differential Equation for Option Prices}
This part aims to evaluate the price of an option (Put or Call) in the regulated electricity market under risk-neutral probability, $\Q$, with the terminal payoff, $H_T$, which is given by:
\begin{eqnarray}
C_t= \E[e^{-r(T-t)}H_T|\mathcal{F}_{t}],
\end{eqnarray}
where $r$ represents a free risk discounting rate, $T$ denotes the maturity, and $K$ represents the strike price. Let
$S_T$ be the solution of \eqref{mod} at T, which is the equation of the underlying. 
$H_T=H(S_T)$, with $H(S)=(S-K)^{+}$ for European Call or $H(S)=(K-S)^{+}$ for European Put. From the Markov property, $C_t$ becomes
\begin{equation}\label{po}
C(t,S)= \E[e^{-r(T-t)}H_T|S_{t}=S].
\end{equation}
\begin{proposition}\label{pro1}
	Assume that the European option C given by
	\begin{eqnarray}
	C: (0,T)\times(0,\infty)&\rightarrow&\R \nonumber\\
	(t,S)&\mapsto&C(t,S)
	\end{eqnarray}
	is $C^{1,2}$, with $\partial C/\partial S$ and $\partial ^2C/\partial S^2$ bounded, then $C$ satisfies the partial integro-differential equation:
	\begin{eqnarray}\label{mod1} 
	\dfrac{\partial C}{\partial t}(t,S) + (\alpha(t)S-\beta(t))\dfrac{\partial C}{\partial S}(t,S) + \dfrac{\sigma(t)^2S^2}{2}\dfrac{\partial^2 C}{\partial S^2}(t,S)-rC(t,S)\nonumber\\
	+\ell\int_{\R}\nu(dx)[C(t,xS)-C(t,S)]=0 
	\end{eqnarray}
	on $(0,T)\times(0,\infty)$ with the terminal condition $C(T,S)=H(S)$,  $\;\forall S>0$, where $\ell$ represents the intensity of the Poisson process under risk-neutral measure, and the measure $\nu(dx)$ is the jump size distribution. 
\end{proposition}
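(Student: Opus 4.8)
The plan is to exploit the martingale characterization of the discounted option price together with the It\^o formula for jump-diffusion processes. The starting point is the observation that, by the tower property applied to \eqref{po}, the discounted price process
$$M_t := e^{-rt}C(t,S_t) = \E\!\left[e^{-rT}H_T \mid \mathcal{F}_t\right]$$
is a $\Q$-martingale with respect to the natural filtration $(\mathcal{F}_t)$. The PIDE \eqref{mod1} will emerge precisely as the condition that forces the finite-variation part of $M_t$ to vanish.

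First I would apply the It\^o formula for jump-diffusion processes to $f(t,S)=e^{-rt}C(t,S)$, which is licit since $C\in C^{1,2}$. Recording that the continuous part of \eqref{mod} has drift $\alpha(t)S_t-\beta(t)$, diffusion $\sigma_t S_t$ and quadratic variation $\sigma_t^2 S_t^2\,dt$, and that at each jump time the process moves from $S_{t^-}$ to $JS_{t^-}$ (so that $\Delta S_t=(J-1)S_{t^-}$, consistently with the jump term in \eqref{mod}), the differential $dM_t$ decomposes into three pieces:
\begin{eqnarray*}
dM_t &=& e^{-rt}\Big[-rC + \tfrac{\partial C}{\partial t} + (\alpha(t)S_t-\beta(t))\tfrac{\partial C}{\partial S} + \tfrac{\sigma_t^2 S_t^2}{2}\tfrac{\partial^2 C}{\partial S^2}\Big]dt \\
&& + e^{-rt}\sigma_t S_t\tfrac{\partial C}{\partial S}\,dW_t + e^{-rt}\big[C(t,JS_{t^-})-C(t,S_{t^-})\big]\,dq_t,
\end{eqnarray*}
all derivatives of $C$ being evaluated at $(t,S_t)$ (respectively $(t,S_{t^-})$).

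Next I would compensate the Poisson jump term. Writing $dq_t=(dq_t-\ell\,dt)+\ell\,dt$ and averaging the jump increment over the law of $J$ — legitimate by the independence of $J$, $q$ and $W$ (Assumption 2) — the compensator of the last term is $\ell\int_{\R}\nu(dx)\,[C(t,xS_t)-C(t,S_t)]\,dt$, where $\nu$ is the distribution of $J$. Thus $M_t$ splits as a drift (finite-variation) part plus the two stochastic integrals against $dW_t$ and $(dq_t-\ell\,dt)$. The boundedness of $\partial C/\partial S$ and $\partial^2 C/\partial S^2$, combined with the assumed boundedness of $\alpha,\beta,\sigma$, supplies the square-integrability needed to guarantee that these two stochastic integrals are genuine (not merely local) martingales, and that the compensating integral against $\nu$ is finite.

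Finally, since $M_t$ is a martingale while the two stochastic integrals against $dW_t$ and $(dq_t-\ell\,dt)$ are martingales, the remaining drift term $\int_0^t e^{-rs}[\cdots]\,ds$ is simultaneously a martingale and a continuous process of finite variation, hence identically zero. Differentiating and cancelling the strictly positive factor $e^{-rt}$ gives exactly \eqref{mod1} at $(t,S_t)$; since $S_t$ ranges over $(0,\infty)$ the equation holds on $(0,T)\times(0,\infty)$, and the terminal condition $C(T,S)=H(S)$ is immediate from \eqref{po} on letting $t\to T$. The main obstacle is this last step — passing rigorously from the ``drift equals zero'' heuristic to a genuine martingale statement — since one must verify the integrability conditions (finiteness of $\E\int_0^T\sigma_s^2 S_s^2(\partial C/\partial S)^2\,ds$ and of the compensated-jump integral) that are precisely what the boundedness hypotheses on the derivatives of $C$ are there to supply.
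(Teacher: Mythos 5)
Your proposal takes essentially the same route as the paper's proof: apply the It\^o formula with jumps to the discounted price $e^{-rt}C(t,S_t)$, which is a $\Q$-martingale by the tower property; compensate the Poisson term using the independence of $J$, $q$ and $W$ (Assumption 2); check that the $dW$ and compensated-jump integrals are true martingales; and conclude that the remaining drift, being simultaneously a martingale and a continuous finite-variation process, vanishes identically, which yields the PIDE. This is exactly the paper's argument, which invokes Theorem 4.13 of Jacod--Shiryaev for the final step.

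There is, however, one step whose justification as you state it would not survive scrutiny: the claim that boundedness of $\partial C/\partial S$, $\partial^2 C/\partial S^2$ and of $\alpha,\beta,\sigma$ by itself ``supplies the square-integrability'' of the two stochastic integrals. The Brownian integrand is $e^{-rt}\sigma_t S_t\,\partial C/\partial S(t,S_t)$, which grows linearly in $S_t$, and $S_t$ is unbounded; likewise the jump integrand $C(t,xS_{t^-})-C(t,S_{t^-})$ is controlled only by a constant times $(|x|+1)S_{t^-}$. What is missing is the moment estimate $\E\bigl[\int_0^T S_t^2\,dt\bigr]<\infty$ together with $\int_{\R}x^2\,\nu(dx)<\infty$; these are facts about the model, not about $C$, and they follow from the bounded coefficients and the log-normal jump law of Assumption 1. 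This is precisely the ingredient the paper adds: it derives a Lipschitz bound for $C(t,\cdot)$ from the Lipschitz payoff $H$ (via the explicit representation of $S$), notes that the log-normal measure $\nu$ has finite second moment, hence $\E\bigl[\int_0^T S_t^2\,dt\bigr]<\infty$, and only then concludes via the isometry formula that both integrals are square-integrable martingales. Once you insert this moment bound, your argument is complete and coincides with the paper's.
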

\vspace{-0.3cm}
\begin{proof}
	The proof consists of applying It\^{o} formula with jump as in \cite{ali} to the martingale $\tilde{C}(t,S_t)=e^{-rt}C(t,S_t)$. Then the result follows from the fact that the drift term is equal to zero.
	
	By construction $\tilde{C}$ is a martingale. Applying the It\^{o} formula to $\tilde{C}$ we obtain:
	\begin{eqnarray*}
		d\tilde{C_t}&=&e^{-rt}\left[-rC(t,S_t)+\dfrac{\partial C}{\partial t}(t,S_t)+\dfrac{\sigma(t)^2S_t^2}{2}\dfrac{\partial^2 C}{\partial S^2}(t,S_t)\right]dt\nonumber\\
		&&+e^{-rt}\dfrac{\partial C}{\partial S}(t,S_t)dS_t\nonumber\\
		&&+e^{-rt}\left[C(t,JS_{t^-})-C(t,S_{t^-})+(J-1)S_{t^-}\dfrac{\partial C}{\partial S}(t,S_{t^-})\right]dq_t.\nonumber\\
	\end{eqnarray*}
	From \eqref{mod} this equation is equivalent to
	\begin{eqnarray*}
		d\tilde{C_t}&=&e^{-rt}\left[-rC(t,S_t)+\dfrac{\partial C}{\partial t}(t,S_t)+\dfrac{\sigma(t)^2S_t^2}{2}\dfrac{\partial^2 C}{\partial S_t^2}(t,S_t)\right]dt\nonumber\\
		&&+e^{-rt}\left[(\alpha(t)S_t-\beta(t))\dfrac{\partial C}{\partial S}(t,S_t)dt+\dfrac{\partial C}{\partial S}(t,S)S_t\sigma(t)dW_t\right]\\
		&&+e^{-rt}\left[C(t,JS_{t^-})-C(t,S_{t^-})\right]dq_t.\nonumber
	\end{eqnarray*}
	Adding and subtracting 
	\begin{eqnarray*}
		\ell\int_{\R}\nu(dx)(C(t,S_tx)-C(t,S_t))dt,
	\end{eqnarray*}
	one has
	\begin{eqnarray*}
		d\tilde{C_t}=a(t)dt+dM_t, 	
	\end{eqnarray*}
	where
	\begin{eqnarray*}
		a(t)&=&e^{-rt}\left[  \dfrac{\partial C}{\partial t}(t,S_t) + (\alpha(t)S_t-\beta(t))\dfrac{\partial C}{\partial S}(t,S_t) + \dfrac{\sigma(t)^2S_t^2}{2}\dfrac{\partial^2 C}{\partial S^2}(t,S_t)\right. \nonumber\\
		&&- rC(t,S_t)+ \left.\ell\int_{\R}\nu(dx)(C(t,S_tx)-C(t,S_t))\right]
	\end{eqnarray*}
	and
	\begin{eqnarray*}
		dM_t&=&e^{-rt}\left[\dfrac{\partial C}{\partial S}(t,S)S_t\sigma(t)dW_t+
		(C(t,JS_{t^-})-C(t,S_{t^-}))d\tilde{q}_t\right],
	\end{eqnarray*}
	with $\tilde{q}_t=q_t-\ell t$. We now show that $M_t$ is a martingale. Since the payoff function $H$ is Lipschitz. Then, $C$ is also Lipschitz with respect to the second variable $S$. Indeed: 
	\begin{eqnarray*}
		|C(t,x)-C(t,y)|&=& e^{-(T-t)}|\E[H(S_te^{X_T-X_{t}}-\int_{t}^{T}\beta(s)e^{X_T-X_{s}}ds)\mid S_{t}=x]\nonumber\\
		&&-\E[H(S_te^{X_T-X_{t}}-\int_{t}^{T}\beta(s)e^{X_T-X_{s}}ds)\mid S_{t}=y]|\nonumber\\
		&\leq&c_1e^{-r(T-t)}\E[e^{\int_t^T\alpha(s)-(1/2)\sigma(s)^2ds+\int_{t}^{T}\sigma(s)dW_s+\int_{t}^{T}\ln J dq_s}]|x-y|,\; \text{for every fixed $t$}. \nonumber\\
	\end{eqnarray*}
	Since $e^{-\int_t^T(1/2)\sigma(s)^2ds+\int_{t}^{T}\sigma(s)dW_s}$ is a martingale and we also have from assumption 1 that $\E[e^{\int_{t}^{T}\ln J dq_s}]=1$, then we get 
	\begin{equation*}
	|C(t,x)-C(t,y)|\leq c|x-y|e^{\int_t^T\alpha(s)ds}\leq c_1|x-y|, 
	\end{equation*}
	with $c_1=ce^{\int_t^T\alpha(s)ds}$.\\
	Therefore the predictable random function $\varphi(t,x)=C(t,xS_{t^-})-C(t,S_{t^-})$ satisfies:
	\begin{eqnarray*}\label{pre}
		\E \big [\int_{0}^{T}\int_{\R}\nu(dx)|\varphi(t,x)|^2dt\big]&\leq&\E [\int_{0}^{T}dt\int_{\R}\nu(dx)c_1(x^2+1)S_t^2]\nonumber\\
		&\leq&\int_{\R}c_1^2(x^2+1)\nu(dx)\E\big[\int_{0}^{T}S_t^2dt\big]\;<\;\infty,
	\end{eqnarray*}
	where the last inequality holds because the distribution $\nu(dx)$ of the jump sizes is assumed $\log$-normal. Indeed, we have $\int_{\R}x^2\nu(dx)<\infty$, hence $\E[\int_{0}^{T}S_t^2dt]<\infty$. Therefore, the compensated Poisson integral
	$$\int_{0}^{T}\int_{\R}e^{-rt}[C(t,xS_{t^-})-C(t,S_{t^-})]d\tilde{q}_t$$ is a square integrable martingale. Since $C$ is Lipschitz, $\dfrac{\partial C}{\partial S}(t,.)\in L^{\infty}$ and $\big\|\dfrac{\partial C}{\partial S}(t,.)\big\|_{L^{\infty}}\leq c_2$. Thus,
	$$\E \big[\int_{0}^{T}S_t^2|\dfrac{\partial C}{\partial S}(t,S_t)|^2dt\big]\leq c_2^2\E [\int_{0}^{T}S_t^2dt]<\infty.$$ Furthermore, 
	using the isometry formula and the preceding result, it follows that $\int_{0}^{T}\dfrac{\partial C}{\partial S}(t,S_t)S_t\sigma(t)dW_t$ is a square integrable martingale. Therefore, $M_t$ is also a square integrable martingale, implying $\tilde{C}_t-M_t$ is a square integrable martingale. But $\tilde{C}_t-M_t=\int_{0}^{t}a(t)dt$ is also a continuous process with finite variation, so, from Theorem 4.13-450 in \cite{jacod}, one must have $a(t)=0$ $\Q$-almost surely, leading to the PIDE \eqref{mod1}.
\end{proof}
Note that the smoothness (particularly the uniform boundedness of derivatives) assumption made on the European call option is not generally verified as discussed in \cite{eka}. In this case, option prices should be considered as a viscosity solution of the PIDE obtained in Proposition \ref{pro1}. The following proposition gives the link between option prices and the viscosity solution of the PIDE.

\begin{proposition}(\textbf{Option prices as viscosity solutions})\\
	The forward value of the European option defined by \eqref{po} is the (unique) viscosity solution of the Cauchy problem \eqref{mod1}.    
\end{proposition}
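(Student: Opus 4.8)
The plan is to prove the two assertions of the statement in turn: first that the option value $C(t,S)$ defined by \eqref{po} is a viscosity solution of the Cauchy problem \eqref{mod1}, and then that it is the \emph{unique} one. I would work throughout in the class of continuous functions on $(0,T)\times(0,\infty)$ with at most linear growth in $S$, which is natural here: the Lipschitz estimate $|C(t,x)-C(t,y)|\le c_1|x-y|$ obtained in the proof of Proposition \ref{pro1}, combined with the finite second moment $\E[\int_0^T S_t^2\,dt]<\infty$ established there, already places $C$ in this class. The starting point is the \emph{dynamic programming principle}: for every $0\le t\le t+h\le T$,
\begin{equation*}
C(t,S)=\E\big[e^{-rh}C(t+h,S_{t+h})\mid S_t=S\big],
\end{equation*}
which follows from the Markov property of $S$ (already used to obtain \eqref{po}) and the tower property of conditional expectation.

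For existence I would argue directly with test functions. To see that $C$ is a subsolution, take $\phi\in C^{1,2}$ touching $C$ from above at a point $(t_0,S_0)$, with $\phi(t_0,S_0)=C(t_0,S_0)$ and $\phi\ge C$ on a neighbourhood. Apply the same It\^o formula with jumps used in the proof of Proposition \ref{pro1}, now to $e^{-rt}\phi(t,S_t)$ on $[t_0,t_0+h]$, substitute into the dynamic programming principle, take expectations so that the martingale part vanishes, divide by $h$ and let $h\downarrow 0$; this yields the subsolution inequality at $(t_0,S_0)$. The supersolution inequality is symmetric, using a test function touching from below. The only delicate point is the nonlocal term: since $\phi$ dominates $C$ only locally, I would follow the device of Cont and Voltchkova \cite{ekafinite} and evaluate the integrand with $\phi$ on the set where the jump $S\mapsto xS$ lands inside the neighbourhood and with $C$ itself on the complementary tail, the tail being controlled as $h\downarrow 0$ by the integrability $\int_\R(|x|^2\wedge 1)\,\nu(dx)<\infty$ guaranteed by the log-normality of $\nu$.

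Uniqueness would follow from a comparison principle: if $u$ is a subsolution and $v$ a supersolution in the same growth class, with $u(T,\cdot)\le v(T,\cdot)=H$, then $u\le v$ on $(0,T)\times(0,\infty)$; applying this in both directions to any two solutions forces equality, so $C$ is the solution. I would prove comparison by the doubling-of-variables technique adapted to integro-differential operators: penalise $u(t,x)-v(s,y)$ by $\tfrac{1}{2\va}(|x-y|^2+|t-s|^2)$ together with a growth-control term, localise a maximum, and apply the nonlocal maximum principle for semijets (in the form developed for PIDEs in \cite{alvarez, bal}) before letting $\va\to 0$. To remove the two structural nuisances --- the multiplicative jumps $S\mapsto xS$ and the degeneracy of $\tfrac{\sigma(t)^2S^2}{2}$ at $S=0$ --- I would first set $u(t,y)=C(t,e^{y})$: under this logarithmic change of variable the jumps become the additive translations $y\mapsto y+\ln x$ and the second-order operator becomes $\tfrac{\sigma(t)^2}{2}(\partial_{yy}-\partial_y)$, non-degenerate in $y$.

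The main obstacle is the comparison principle, and concretely the drift contribution coming from $\beta$ --- the extra process-dependent term flagged in the introduction (see \eqref{2terms}). Under the logarithmic change of variable the transport term becomes $(\alpha(t)-\beta(t)e^{-y})\partial_y u$, whose coefficient $\beta(t)e^{-y}$ is unbounded as $y\to-\infty$ (that is, as $S\to 0$), so the Lipschitz-in-$(t,x)$ structural hypotheses under which the comparison theorems for PIDEs are usually stated do not apply off the shelf and must be adapted, for instance by restricting to the relevant growth class or introducing a suitable exponential weight. Verifying that this modification is compatible with both the singular near-part of the nonlocal term, handled by the second-order jet information, and its far-part, controlled by the L\'evy integrability and linear growth, together with the time-dependence of $\alpha,\beta,\sigma$, is where essentially all the technical work lies.
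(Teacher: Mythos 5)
Your proposal takes a genuinely different route from the paper, because the paper does not really prove this proposition at all: its entire proof is a citation, invoking \cite{alvarez} for existence and uniqueness of viscosity solutions of parabolic PIDEs with a finite jump measure, and \cite{ekafinite} for convergence of the numerical scheme presented afterwards. You instead sketch the full machinery: the dynamic programming principle, test-function arguments for the sub- and supersolution properties, and a doubling-of-variables comparison principle for uniqueness. If carried out, your route would actually deliver the one thing the paper's citation silently skips, namely the identification of the probabilistic quantity \eqref{po} with the viscosity solution of \eqref{mod1}; the results of \cite{alvarez} concern the PIDE itself, and connecting them to the conditional expectation \eqref{po} is precisely what your DPP-plus-It\^o argument supplies. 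What the paper's approach buys is brevity and reliance on published structural theorems; what yours buys is a self-contained proof of the statement as written.

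Two caveats on your technical discussion. First, you over-engineer the nonlocal term: here $\nu$ is the law of the log-normal jump $J$, a \emph{finite} measure with intensity $\ell$, so the integral operator is bounded on bounded functions; there is no singular small-jump part, no need for the $\int_{\R}(|x|^2\wedge 1)\,\nu(dx)$ control or for second-order jet information in the nonlocal part, and in the comparison argument it can be absorbed as a zeroth-order perturbation by a Gronwall-type estimate. Second, the obstacle you flag, the coefficient $\beta(t)e^{-y}$ becoming unbounded as $y\to-\infty$, is real (and, incidentally, the paper's own localized operator \eqref{ca} handles this term inconsistently), but it is created by your own logarithmic change of variables: in the original variable the drift $\alpha(t)S-\beta(t)$ is affine, $\sigma(t)S$ is Lipschitz, and the jump map $S\mapsto xS$ is Lipschitz in $S$, which is the structure the cited comparison results require. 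The price of staying in $S$ is the degenerate boundary at $S=0$, where, since the explicit solution \eqref{2terms} can in fact cross zero when $\beta>0$, the equation on $(0,T)\times(0,\infty)$ is not a complete characterization without saying what happens at or below zero; neither your sketch nor the paper addresses this. So your proposal is a sound and honest program, but the comparison step you yourself identify as the crux remains a program rather than a proof; in fairness, the paper's two-sentence proof establishes even less.
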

\begin{proof}
	Existence and uniqueness of viscosity solutions for such parabolic integro-differential equations are discussed in \cite{alvarez} in the case (the one considered here) where $\nu$ is the finite measure. In what follows, we propose a numerical solution to the PIDE which converges to the viscosity solution as proven in \cite{ekafinite}.
\end{proof}
\section{An Explicit-Implicit Difference scheme} 	  
In this section we present a numerical procedure for solving the PIDE \eqref{mod1} obtained in Proposition \ref{pro1}. Introducing the change of variable $ x=\ln\dfrac{S}{S_0}\quad\text{and}\quad\tau=T-t$ and defining: $u(\tau,x)=e^{r\tau}C(T-\tau,S_0e^{x})$, we obtain: 
\begin{eqnarray}\label{fprice}
u(\tau,x)&=&\E\left[H(S_te^{X_T-X_{T-\tau}}-\int_{T-\tau}^{T}\beta(s)e^{X_T-X_{s}}ds)\mid S_t=S_0e^{x}\right]\nonumber\\
&=& \E\left[H(Y^{x}_{\tau})\right], 
\end{eqnarray}
where $Y^{x}_{\tau}=S_0e^{x+X_T-X_{T-\tau}}-\int_{T-\tau}^{T}\beta(s)e^{X_T-X_{s}}ds $.
We then obtain a PIDE in terms of $u$, given by:
\begin{equation}\label{pid}
\begin{cases}
\dfrac{\partial u}{\partial \tau}=\mathcal{L}u,\,\,on\,\, (0,T]\times O\\
u(0,x)=H(S_0e^{x}),\;\;\;  x\in O,\;\;\;   u(\tau,x)=0,\;\;\;  x\in O^{c},
\end{cases}
\end{equation}

where $O\subset \R$ is an open interval which is not necessarily bounded,  
\begin{eqnarray*} 
\mathcal{L}u(\tau,x)&=&\left(\alpha(T-\tau)-\dfrac{1}{2}\sigma^2(T-\tau)-\dfrac{\beta(T-\tau)}{S_0}\right)\dfrac{\partial u}{\partial x}(\tau,x)+\dfrac{1}{2}\sigma^2(T-\tau)\dfrac{\partial^2 u}{\partial x^2}(\tau,x)\nonumber\\
&&+\ell\int_{\R}\left[u(\tau,x+y)-u(\tau,x)\right]g_{\ln J}(y)dy,
\end{eqnarray*}
with $g_{\ln J}$ denoting the density function of $\ln J$.

The main idea in this method is to split the operator $\mathcal{L}$ into two parts as in \cite{eka}. We replace the differential part with a finite difference approximation, and the integral part with a trapezoidal quadrature approximation. We treat the integral part with an explicit time stepping in order to avoid the inversion problem of the dense matrix $L_J$ associated to the discretization of the integral term.
We then rewrite the PIDE \eqref{pid} as follow:
\begin{equation}\label{pide}
\begin{cases}
\dfrac{\partial u}{\partial \tau}= (\mathcal{L}_D+\mathcal{L}_J)u,\,\,on\,\, (0,T]\times O\\
u(0,x)=H(S_0e^{x}),\;\;\;  x\in O,\;\;\;   u(\tau,x)=0,\;\;\;  x\in O^{c},
\end{cases}
\end{equation}
where
\begin{eqnarray}\label{ca}
\mathcal{L}_Du(\tau,x)&=&\left(\alpha(T-\tau)-\dfrac{1}{2}\sigma^2(T-\tau)-\dfrac{\beta(T-\tau)}{S_0}\right)\dfrac{\partial u}{\partial x}(\tau,x)\nonumber\\&&+\dfrac{1}{2}\sigma^2(T-\tau)\dfrac{\partial^2 u}{\partial x^2}(\tau,x),\\
\mathcal{L}_Ju(\tau,x)&=&\ell\int_{\R}\left[u(\tau,x+y)-u(\tau,x)\right]g_{\ln J}(y)dy.
\end{eqnarray}
Hence, we obtain the approximate problem using the following explicit-implicit time stepping scheme:
\begin{eqnarray}
\dfrac{u^{n+1}-u^{n}}{\Delta t}=L_Du^{n+1}+L_Ju^{n}.
\end{eqnarray}
Before showing the stability of this scheme and applying discretization, the equation must be localized to a bounded domain.
\subsection{Localisation to a bounded domain}
To numerically solve the Cauchy problem \eqref{pide}, we first truncate the space domain to a bounded interval $(-A_l,A_r)$. Usually this leads to defining some boundary conditions at $x=-A_l$ and $x=A_r$. But here, we are in an elliptic local PIDE due to the presence of an integral term. Thus, we need to extend the function $u(\tau,.)$ to a subset $\left\{x+y:x\in (-A_l,A_r),y\in \text{supp}\, g_{\ln J}\right\}$, where $\text{supp}\, g_{\ln J}=\R_+$, is the support of $g_{\ln J}$. Let $u_A(\tau,x)$ be the solution of the following localization problem:
\begin{eqnarray}\label{ca1}
\begin{cases}
\dfrac{\partial u_{l,r}}{\partial \tau}= (\mathcal{L}_D+\mathcal{L}_J)u_{l,r},\,\,on\,\, (0,T]\times (-A_l,A_r)\\
u_{l,r}(0,x)=H(S_0e^{x}),\,\, x\in(-Al,Ar) \,\,\,\,   u_{l,r}(\tau,x)=0 ,\,\, x\notin(-A_l,A_r). 
\end{cases}
\end{eqnarray}
We will show in the following proposition that the localization error decays exponentially with the domain size $A$.
\begin{proposition}\label{pro3}
	Assume $\| H \|_{\infty}<\infty$ and $C_{\tau}=\E 
	\left[e^{\sup\limits_{\eta \in[0,\tau]}|X_T-X_{T-\eta}|}\right]<\infty$. Let $u_{l,r}(\tau,x)$ and $u(\tau,x)$ be respectively the solutions of the Cauchy problems \eqref{pide} and \eqref{ca1}. Then 
	\begin{eqnarray}
	|u(\tau,x)-u_{l,r}(\tau,x)|\leq C_{\tau} \| H \|_{\infty}e^{-\max(A_l,A_r)+|x|},\quad \forall x\in (-A_l,A_r)
	\end{eqnarray}
	where the constant $C_{\tau}$ does not depend on $A_r$ and $A_l$.
\end{proposition}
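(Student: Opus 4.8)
The plan is to route everything through the Feynman–Kac representation \eqref{fprice} that is already available for $u$, and to derive a parallel representation for the localized solution $u_{l,r}$ in which the underlying log-price process is \emph{killed} upon leaving the truncated interval $(-A_l,A_r)$. Since the localized problem \eqref{ca1} imposes $u_{l,r}\equiv 0$ outside the domain, the natural stochastic representation is the same expectation as in \eqref{fprice}, but restricted to the event that the spatial process has not exited before time $\tau$. Concretely, writing $\xi^x_\eta = x + X_T - X_{T-\eta}$ for the log-state driving the operator $\mathcal{L}$, and $\theta=\inf\{\eta\in(0,\tau]:\xi^x_\eta\notin(-A_l,A_r)\}$ for its first exit time, I would establish
\[
u(\tau,x)=\E\!\left[H(Y^x_\tau)\right],\qquad
u_{l,r}(\tau,x)=\E\!\left[H(Y^x_\tau)\,\mathbf{1}_{\{\theta>\tau\}}\right].
\]

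Subtracting the two representations isolates the localization error on the exit event, and then the uniform bound on $H$ does the work:
\[
\bigl|u(\tau,x)-u_{l,r}(\tau,x)\bigr|
=\Bigl|\E\!\left[H(Y^x_\tau)\,\mathbf{1}_{\{\theta\le\tau\}}\right]\Bigr|
\le \|H\|_\infty\,\Q(\theta\le\tau).
\]
It then remains to estimate the exit probability. Leaving $(-A_l,A_r)$ from the interior point $x$ forces the increment $X_T-X_{T-\eta}$ to cover at least the distance to the nearest boundary, so I would bound $\Q(\theta\le\tau)\le \Q\!\left(\sup_{\eta\in[0,\tau]}|X_T-X_{T-\eta}|\ge \operatorname{dist}(x,\partial)\right)$ and control the right-hand side by an exponential Chebyshev (Markov) inequality: for any threshold $d$,
\[
\Q\!\left(\sup_{\eta\in[0,\tau]}|X_T-X_{T-\eta}|\ge d\right)
\le e^{-d}\,\E\!\left[e^{\sup_{\eta\in[0,\tau]}|X_T-X_{T-\eta}|}\right]
= C_\tau\,e^{-d}.
\]
Choosing $d$ equal to the boundary distance and bounding it from below in terms of $\max(A_l,A_r)$ and $|x|$ produces the claimed estimate, with $C_\tau$ independent of $A_l,A_r$ precisely by the standing hypothesis $C_\tau<\infty$.

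The step I expect to be the main obstacle is justifying the killed-process representation for $u_{l,r}$ rigorously in the presence of the nonlocal term $\mathcal{L}_J$. Because the jump component lets the spatial process leave $(-A_l,A_r)$ by a jump that \emph{overshoots} the boundary rather than by a continuous crossing, I must verify that assigning the zero boundary data on all of $O^c$ is the correct value, so that the Feynman–Kac formula with killing at $\theta$ genuinely solves \eqref{ca1}; this is exactly where the argument differs from the purely diffusive case and where I would lean on the representation used in \cite{ekafinite}. A secondary technical point specific to this model is the extra running term $\int_{T-\tau}^{T}\beta(s)e^{X_T-X_s}\,ds$ inside $Y^x_\tau$: it does not affect the exit mechanism, which depends only on the log-state $\xi^x_\eta$, but it must be carried through both representations and absorbed into the sup-norm bound $\|H\|_\infty$. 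Once these representations are secured, the tail estimate is routine under $C_\tau<\infty$.
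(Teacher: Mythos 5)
Your overall route is the same as the paper's: represent both $u$ and $u_{l,r}$ probabilistically via \eqref{fprice}, observe that the difference is carried by the event that the spatial process leaves the truncated domain, pull out $\|H\|_\infty$, and control the tail of $\sup_{\eta\in[0,\tau]}|X_T-X_{T-\eta}|$ by the exponential Markov inequality under the hypothesis $C_\tau<\infty$. The paper does exactly this, working with $M^x_\tau=\sup_{\eta\in[0,\tau]}|x+X_T-X_{T-\eta}|$ and the asserted representation $u_{l,r}(\tau,x)=\E\bigl[\mathds{1}_{\{M^x_\tau<\max(A_l,A_r)\}}H(Y^x_\tau)\bigr]$.

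There is, however, a genuine gap in your final step, and it sits exactly where your (more honest) killed-process representation diverges from the paper's. If $\theta$ is the true first exit time from $(-A_l,A_r)$, then exiting from the interior point $x$ only forces
\begin{equation*}
\sup_{\eta\in[0,\tau]}|X_T-X_{T-\eta}|\;\ge\;\operatorname{dist}\bigl(x,\partial(-A_l,A_r)\bigr)\;=\;\min(x+A_l,\,A_r-x)\;\ge\;\min(A_l,A_r)-|x|,
\end{equation*}
and this lower bound cannot be improved to $\max(A_l,A_r)-|x|$: when $A_r\gg A_l$ and $x$ is near $-A_l$, the exit probability is of order one while $e^{-\max(A_l,A_r)+|x|}$ is exponentially small, so no argument through $\Q(\theta\le\tau)$ can produce the stated rate. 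Carried out correctly, your proof yields $|u(\tau,x)-u_{l,r}(\tau,x)|\le C_\tau\|H\|_\infty e^{-\min(A_l,A_r)+|x|}$, not the claimed estimate. The paper obtains the $\max$ exponent only because its representation of $u_{l,r}$ kills the process on exit from the \emph{larger symmetric} interval $(-\max(A_l,A_r),\max(A_l,A_r))$ rather than from $(-A_l,A_r)$; that representation is asserted without proof and does not match the Dirichlet condition of \eqref{ca1} unless $A_l=A_r$ --- which is precisely the justification issue (zero data on all of the complement, jump overshoot) that you flagged as your main obstacle and that the paper glosses over. In the symmetric case $A_l=A_r=A$ your distance bound is exact ($\operatorname{dist}=A-|x|$) and the two arguments coincide; for the asymmetric statement as written, either the exponent should read $\min(A_l,A_r)$, or one must prove the killed representation relative to the symmetric interval.
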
 
\begin{proof}
	Let $M^{x}_{\tau}=\sup\limits_{\eta\in[0,\tau]} |x+X_T-X_{T-\eta}|$. Then 
	\begin{eqnarray}
	&&u_{l,r}(\tau,x)=\E\left[\mathds{1}_{\{M^{x}_{\tau}< \max(Al,Ar)\}}H(Y^{x}_{\tau})\right]\nonumber\\
	\text{and}&&u(\tau,x)=\E\left[H(Y^{x}_{\tau})\right].
	\end{eqnarray}
	Hence
	\begin{eqnarray}\label{d}
	|u-u_{l,r}|&=&\Big|\E 
	\left[H(Y^{x}_{\tau})\mathds{1}_{\{M^{x}_{\tau}\geq \max(A_l,A_r)\}}\right]\Big| \nonumber\\
	&\leq&\|H\|_{\infty}\left|\E\left[\mathds{1}_{\{M^{x}_{\tau}\geq \max(A_l,A_r)\}}\right]\right| \nonumber\\
	&\leq& \|H\|_{\infty}\Q (M^{x}_{\tau}\geqslant\max(Al,Ar).
	\end{eqnarray}
	Theorem 25.18 in \cite{sato} and the fact that $\int_{\R}e^{|x|}\nu(dx)<\infty$ imply
	\begin{eqnarray}
	C_{\tau}=\E\left[e^{\sup\limits_{\eta\in[0,\tau]} |X_T-X_{T-\eta}|}\right]<\infty.
	\end{eqnarray}
	But
	\begin{eqnarray}
	\Q(M^{x}_{\tau}\geqslant \max(Al,Ar))&=&\Q(e^{M^{x}_{\tau}}\geqslant e^{\max(A_l,A_r)})\\
	\end{eqnarray}
	and, since $\sup\limits_{\eta\in[0,\tau]} |x+X_T-X_{T-\eta}|\leq\sup\limits_{\eta\in[0,\tau]} |X_T-X_{T-\eta}|+|x|$, then
	
	$\left\{M^{x}_{\tau}\geqslant \max(Al,Ar)\right\}\subset \left\{\sup\limits_{\eta\in[0,\tau]} |X_T-X_{T-\eta}|
	\geq\max(Al,Ar)-|x|\right\} $.	
	Hence
	\begin{eqnarray}
	\Q \left(e^{M^{x}_{\tau}}\geqslant e^{\max(Al,Ar)}\right)\leq \Q \left(e^{\sup\limits_{\eta\in[0,\tau]} |X_T-X_{T-\eta}|}\geqslant e^{\max(Al,Ar)-|x|}\right).
	\end{eqnarray}
	Now, using Markov's inequality we obtain
	\begin{equation}
	\Q\left(e^{\sup\limits_{\eta\in[0,\tau]} |X_T-X_{T-\eta}|}\geqslant e^{\max(Al,Ar)}\right)
	\leq\dfrac{\E\left[e^{\sup\limits_{\eta\in[0,\tau]} |X_T-X_{T-\eta}|}\right]}{e^{\max(Al,Ar)-|x|}}.
	\end{equation}
	Comparing these last inequalities with (\eqref{d}) gives the desired result.
\end{proof}
\subsection{Truncation of the integral}
To compute numerically the integral term of the PIDE \eqref{pide}, we need to reduce the region of integration to a bounded interval which leads to the truncation of large jumps. We then estimate the error resulting from this approximation. Precisely, suppose a new process, $\tilde{S}_t$, is characterized by the fact that logarithm of the jump size, $\ln \tilde{J}$, is bounded in $[B_l,B_r]$, with the associated measure $\mathds{1}_{\{y\in[Bl,Br]\}}\nu$, where $B_l$ and $B_r$ are real. We further suppose, without loss generality, that $B_l<0$ and $B_r>0$. In this case the corresponding solution to the associated PIDE is denoted by $\tilde{u}(\tau,x)
$. We analyse, in the following proposition, the difference  $|u-\tilde{u}|$.
\begin{proposition}\label{pro4}
	One has:
	\begin{eqnarray}
	|u(\tau,x)-\tilde{u}(\tau,x)|\leq C^{\tau}\left(C_1e^{-|B_l|}+C_2e^{-B_r}\right),
	\end{eqnarray}
	where $C^{\tau}=C\left[\tau \ell S_0+\ell\int_{T-\tau}^{T}\beta(s)(T-s)ds\right]$.
\end{proposition}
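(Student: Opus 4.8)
The plan is to represent both solutions probabilistically, reduce their difference to a single expectation via the Lipschitz property of the payoff, and then isolate the effect of the truncated jumps through a coupling argument. Exactly as in \eqref{fprice} I would write
\[
u(\tau,x)=\E[H(Y^{x}_{\tau})], \qquad \tilde u(\tau,x)=\E[H(\tilde Y^{x}_{\tau})],
\]
where $\tilde Y^{x}_{\tau}$ is built from the process $\tilde X$ obtained by keeping the same drift, the same Brownian motion and the same Poisson jump times as $X$, but retaining only those jumps whose log-size $\ln J_i$ lies in $[B_l,B_r]$. Since $H$ is Lipschitz (as used in the proof of Proposition \ref{pro1}), with constant $C$,
\[
|u(\tau,x)-\tilde u(\tau,x)| \le C\,\E\big[|Y^{x}_{\tau}-\tilde Y^{x}_{\tau}|\big],
\]
and from the explicit form of $Y^{x}_{\tau}$ this expectation splits into an exponential part, of size $S_0e^{x}\,\E|e^{X_T-X_{T-\tau}}-e^{\tilde X_T-\tilde X_{T-\tau}}|$, and a drift-integral part $\int_{T-\tau}^{T}\beta(s)\,\E|e^{X_T-X_s}-e^{\tilde X_T-\tilde X_s}|\,ds$. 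These two pieces are precisely what will generate the $\tau\ell S_0$ and the $\ell\int_{T-\tau}^{T}\beta(s)(T-s)\,ds$ contributions in $C^{\tau}$.

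The heart of the argument is the estimate of $\E|e^{X_T-X_s}-e^{\tilde X_T-\tilde X_s}|$ for fixed $s$. Writing $R_s=\sum_{s<t_i\le T}\ln J_i\,\mathds{1}_{\{\ln J_i\notin[B_l,B_r]\}}$ for the sum of the truncated (large) jumps on $(s,T]$, the coupling gives $X_T-X_s=(\tilde X_T-\tilde X_s)+R_s$, hence
\[
e^{X_T-X_s}-e^{\tilde X_T-\tilde X_s}=e^{\tilde X_T-\tilde X_s}\big(e^{R_s}-1\big).
\]
By Assumption 2 and the independence of the Poisson random measure over disjoint sets of jump sizes, $\tilde X_T-\tilde X_s$ and $R_s$ are independent, so the expectation factorises as $\E[e^{\tilde X_T-\tilde X_s}]\,\E|e^{R_s}-1|$. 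The first factor is bounded uniformly in the truncation by $e^{\int_s^{T}\alpha(u)\,du}$ times a jump contribution that stays bounded because $\int_{\R}e^{y}\nu(dy)=\E[J]=1<\infty$.

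For the second factor I would condition on the number of truncated jumps: the no-jump event contributes $0$, and applying a compensation (Campbell) formula to the telescoping identity $e^{R_s}-1=\sum_k e^{\sum_{j<k}\zeta_j}(e^{\zeta_k}-1)$ yields
\[
\E|e^{R_s}-1|\le C\,\ell\,(T-s)\int_{\{y\notin[B_l,B_r]\}}|e^{y}-1|\,\nu(dy),
\]
the boundedness of $\E[e^{\sum_{j<k}\zeta_j}]$ again following from the finite exponential moment of $\nu$. Finally I would estimate the tail integral: on $(-\infty,B_l)$ one has $|e^{y}-1|\le1$, so the left part is controlled by the Gaussian tail $\nu((-\infty,B_l))\le C_1 e^{-|B_l|}$, while on $(B_r,\infty)$ one has $|e^{y}-1|\le e^{y}$, and completing the square in the log-normal density gives $\int_{B_r}^{\infty}e^{y}\,\nu(dy)\le C_2 e^{-B_r}$.

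Substituting these bounds into the two parts from the first step, performing the $s$-integration (which produces the factor $\int_{T-\tau}^{T}\beta(s)(T-s)\,ds$) and absorbing the bounded factor $e^{x}$ over the already localised domain into the generic constant, yields the claimed inequality with $C^{\tau}=C\big[\tau\ell S_0+\ell\int_{T-\tau}^{T}\beta(s)(T-s)\,ds\big]$. The main obstacle is the control of $\E|e^{R_s}-1|$: one must pass from the crude all-orders compound-Poisson expansion to the clean first-order bound, which is exactly where the compensation formula together with the finiteness of the exponential moment of $\nu$ is essential; the independence factorisation and the Gaussian tail estimates are then routine.
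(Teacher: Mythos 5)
Your proposal is correct and follows essentially the same route as the paper's own proof: the coupling representation in which $R_s$ collects the discarded large jumps, the Lipschitz reduction of $|u-\tilde u|$ to $\E\big[|Y^x_\tau-\tilde Y^x_\tau|\big]$, the independence factorization $\E[e^{\tilde X_T-\tilde X_s}]\,\E|e^{R_s}-1|$, a first-order compound-Poisson (compensation) estimate producing the factor $\ell(T-s)$, and exponential tail bounds for the log-normal jump density, assembled into the two contributions $\tau\ell S_0$ and $\ell\int_{T-\tau}^{T}\beta(s)(T-s)\,ds$. The only divergence is cosmetic: the paper first reduces $\E|e^{R_s}-1|$ to $\E|R_s|$ via a pointwise inequality and then applies the compensation formula with the weight $|y|e^{|y|}$, whereas you apply the compensation formula to $e^{R_s}-1$ directly and estimate the tails of $|e^y-1|$; both give constants of the stated form.
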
 
\begin{proof}
	Firstly, let $ \tilde{X_t}$ be a new L\'{e}vy process defined by:
	\begin{equation}\label{modint}
	\tilde{X_t}=\int_0^t\alpha(s)-\dfrac{1}{2}\sigma(s)^2ds+\int_{0}^{t}\sigma(s)dW_s+\int_{0}^{t}\ln \tilde{J} dq_s,
	\end{equation}
	and let:
	\begin{equation}\label{rest}
	\tilde{u}(\tau,x)=\E[H(\tilde{Y}^{x}_{\tau})],
	\end{equation}
	where $\tilde{Y}^{x}_{\tau}=S_0e^{x+\tilde{X}_T-\tilde{X}_{T-\tau}}-\int_{T-\tau}^{T}\beta(s)e^{\tilde{X}_T-\tilde{X}_{s}}ds$.
	Setting $R_{\tau}=X_T-\tilde{X}_T-(X_{\tau}-\tilde{X}_{\tau})$  and using the Lipschitz property on $H$, we obtain: 
	\begin{eqnarray}\label{in}
	|u(\tau,x)-\tilde{u}(\tau,x)|&=&|\E[H(Y^{x}_{\tau})]-\E[H(\tilde{Y}^{x}_{\tau})]|\nonumber\\
	&\leq&c_1\E\left[|S_0(e^{x+\tilde{X}_T-\tilde{X}_{T-\tau}+R_{T-\tau}}-e^{x+\tilde{X}_T-\tilde{X}_{T-\tau}})\right.\nonumber\\
	&&\left.-\int_{T-\tau}^{T}\beta(s)(e^{(\tilde{X}_T-\tilde{X}_{s}+R_{s}}-e^{\tilde{X}_T-\tilde{X}_{s}}ds|\right]\nonumber\\
	&\leq&c_1\left(S_0\E[e^{\tilde{X}_T-\tilde{X}_{T-\tau}}|e^{R_{T-\tau}}-1|]\right.\nonumber\\&&+ \left.\int_{T-\tau}^{T}\beta(s)\E[e^{\tilde{X}_T-\tilde{X}_{s}}|e^{R_{s}}-1|]ds\right).
	\end{eqnarray}
	Since $R_{\tau}$ and $\tilde{X}_{T}-\tilde{X}_{\tau} $ are independent, we have   
	\begin{eqnarray}\label{inn}
	|u(\tau,x)-\tilde{u}(\tau,x)|&\leq& c_1e^{x}(S_0\E[e^{\tilde{X}_T-\tilde{X}_{T-\tau}}]\E[|e^{R_{T-\tau}}-1|]+\int_{T-\tau}^{T}\beta(s)\E[e^{\tilde{X}_T-\tilde{X}_{s}}]\E[|e^{R_{s}}-1|]ds). 
	\end{eqnarray}
	Moreover, $\left(e^{\tilde{X}_T-\tilde{X}_{T-u}}, u\in[0,T]\right)$ being a martingale, $\E[e^{\tilde{X}_T-\tilde{X}_{T-\tau}}]=1$ and $\E[e^{\tilde{X}_T-\tilde{X}_{s}}]=1$. As a consequence,  
	\begin{eqnarray}\label{eq:inn1}
	|u(\tau,x)-\tilde{u}(\tau,x)|&\leq& c_1e^{x}(S_0\E[|e^{R_{T-\tau}}-1|]+\int_{T-\tau}^{T}\beta(s)\E[|e^{R_{s}}-1|]ds). 
	\end{eqnarray}
	Since, for every $a\in \R$, $|e^{a}-1|=(e^{a}-1)+2(e^{a}-1)^+$ and $(e^{a}-1)^+\leq |a|$, then	
	\begin{eqnarray}\label{inn1}
	|u(\tau,x)-\tilde{u}(\tau,x)|&\leq& c_1e^{x}(S_0\E[|R_{T-\tau}|]+\int_{T-\tau}^{T}\beta(s)\E[|R_{s}|]ds). 
	\end{eqnarray}	
	
	But:
	\begin{eqnarray}\label{inegali}
	\E[|R_{T-\tau}|]&\leq& \ell\int_{T-\tau}^{T}\left[-\int_{-\infty}^{Bl}yg_{\ln J}(y)dy+\int_{Br}^{+\infty}yg_{\ln J}(y)dy\right]ds\nonumber\\
	&\leq&\tau \ell \left(-e^{-|Bl|}\int_{-\infty}^{Bl}ye^{|y|}g_{\ln J}(y)dy+e^{-|Br|}\int_{Br}^{+\infty}ye^{|y|}g_{\ln J}(y)dy\right)\nonumber\\ 
	&\leq&\tau \ell S_0\left(C_1e^{-|Bl|}+C_2e^{-|Br|}\right),
	\end{eqnarray}
	where $C_1=-\int_{-\infty}^{Bl}ye^{|y|}g_{\ln J}(y)dy$ and $C_2=\int_{Br}^{+\infty}ye^{|y|}g_{\ln J}(y)dy$. Replacing \eqref{inegali} into \eqref{in}, we get:
	\begin{eqnarray}
	|u(\tau,x)-\tilde{u}(\tau,x)|&\leq&C\left[\tau \ell S_0+\ell\int_{T-\tau}^{T}\beta(s)(T-s)ds\right]\left(C_1e^{-|Bl|}+C_2e^{-|Br|}\right),
	\end{eqnarray}
	where $C=c_1e^{x}$
\end{proof}
From Proposition\ref{pro3} and \ref{pro4}, $\tilde{u}$ converges to $u$ when $|B_l|$ and $|B_r|$ grow to infinity.
\subsection{Explicit-implicit scheme}
Define a uniform grid on $(0,T]\times(-A_l,A_r)$ by $\tau_{n}=n\Delta t$, $n=0$,.....,$M$, $x_i=i\Delta x-A_l$, $i=0,..,N$, with $\Delta t=T/M$ and $\Delta x=\dfrac{A_r+A_l}{N}$. Let $(u^{n}_{i})$ be the solution of the numerical scheme which we define below: Firstly, to approximate the integral terms, we use the trapezoidal quadrature rule with the same resolution $\Delta x$. Let $K_l$ and $K_r$ be such that $[B_l,B_r]\subset[(K_l-1/2)\Delta x,(K_r+1/2)\Delta x]$, then:
\begin{equation}\label{eq}
\int_{B_l}^{B_r}(u(\tau,x_i+y)-u(\tau,x_i))g_{\ln J}(y)dy\simeq \sum\limits_{j=K_l}^{K_r}\nu_j(u_{i+j}-u_i), 
\end{equation}
where $\nu_j=\int_{(j-1/2)\Delta x}^{(j+1/2)\Delta x} g_{\ln J}(y)dy$. Notice that to compute the integral term, we need to extend the solution to $[-A_l+B_l,Ar+B_r]$. Hence, we assume that this solution is zero except in $[-A_l,A_r]$. The derivatives are discretized using the finite difference method thus:
\begin{equation}\label{eqa}
\left\{\begin{array}{rl}
\left(\dfrac{\partial ^{2}u}{\partial x^2}\right)_{i}\simeq& \dfrac{u_{i+1}-2u_{i}+u_{i-1}}{(\Delta x)^2}\\
\left(\dfrac{\partial u}{\partial x}\right)_{i}\simeq& \left\{\begin{array}{rl}\dfrac{u_{i+1}-u_i}{\Delta x}&\mbox{if}\, f(\tau,x)\geq 0\\
\dfrac{u_{i}-u_{i-1}}{\Delta x}&\mbox{if}\, f(\tau,x)\leq 0,
\end{array}\right.
\end{array}\right.\end{equation}
where $f(\tau,x)=\alpha(T-\tau)-\dfrac{1}{2}\sigma^2(T-\tau)-\dfrac{\beta(T-\tau)}{S_0}e^{x}$.\\
Using \eqref{eq} and \eqref{eqa}, and supposing $f(\tau,x)<0$, we obtain the following relation:
\begin{equation}
\dfrac{u^{n+1}_{i}-u^{n}_{i}}{\Delta t}=\left(L_Du\right)^{n+1}_{i}+\left(L_Ju\right)^{n}_{i},
\end{equation}
where
\begin{equation}
\left\{\begin{array}{rl}
\left(L_Du\right)^{n}_{i}=&f(\tau_{n},x_i)\dfrac{u^{n}_{i+1}-u^{n}_i}{\Delta x}+\dfrac{1}{2}\sigma^2(T-\tau_{n})\dfrac{u^{n}_{i+1}-2u^{n}_{i}+u^{n}_{i-1}}{(\Delta x)^2}\\
\left(L_Ju\right)^{n}_{i}=& \sum\limits_{j=K_l}^{K_r}\nu_j(u^{n}_{i+j}-u^{n}_{i}).
\end{array}\right.
\end{equation}
Finally, we replace the problem \eqref{pide} with the following time-stepping numerical scheme:
\begin{equation}\label{fds}
\left\{\begin{array}{ll}
\text{Initialisation}\,\,u^0_i=H(S_0e^{x_i})& \text{if}\,\, i\in \{0,...N-1\}\\
\text{For n=0,...,M-1}&\\
\dfrac{u^{n+1}_{i}-u^{n}_{i}}{(\Delta t)}=\left(L_Du\right)^{n+1}_{i}+\left(L_Ju\right)^{n}_{i}&\text{if}\,\,i\in\{0,..,N-1\}\\
u^{n+1}_{i}=0&\text{if}\,\, \notin\{0,..,N-1\}.
\end{array}\right.
\end{equation}
After defining the numerical scheme, we study some of its important properties, particularly, consistency, monotonicity, stability and convergence.
\subsection{Consistency}
The follow proposition shows that (\ref{fds}) is consistent with \eqref{pide}. 
\begin{proposition}(\textbf{Consistency})\\	
	The finite difference scheme \eqref{fds} is locally consistent with equation \eqref{pide}: That is,  $\forall \, v\in C^{\infty}_{0}([0,T]\times(A_l,A_r))$, and $\forall\,(\tau_n,x_i)\in [0,T]\times\R$, one has:
	\begin{equation}
	\left|\dfrac{v^{n+1}_{i}-v^{n}_{i}}{(\Delta t)}-
	\left(L_Dv\right)^{n+1}_{i}-\left(L_Jv\right)^{n}_{i}-\dfrac{\partial v}{\partial \tau}(\tau_n,x_i)-(\mathcal{L}_D+\mathcal{L}_J)v(\tau_n,x_i)\right|=r^{n}_{i}(\Delta t,\Delta x)\rightarrow 0
	\end{equation}
	as $(\Delta t,\Delta x)\rightarrow (0,0)$. In other words,
	$\exists c>0$ such that:  $\;\;|r^{n}_{i}(\Delta t,\Delta x)|\leq c(\Delta t+\Delta x).$
\end{proposition}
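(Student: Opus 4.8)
\emph{Strategy.} The plan is to fix the test function $v\in C^{\infty}_{0}([0,T]\times(A_l,A_r))$, whose derivatives of every order are bounded and which vanishes outside a compact set, and to decompose the local truncation error by grouping terms according to the continuous operator each discrete operator is meant to approximate. The expression inside the absolute value, written consistently with the PIDE $\partial_\tau u=(\mathcal{L}_D+\mathcal{L}_J)u$, is
\[
\Big[\tfrac{v^{n+1}_{i}-v^{n}_{i}}{\Delta t}-\tfrac{\partial v}{\partial\tau}(\tau_n,x_i)\Big]-\big[(L_Dv)^{n+1}_{i}-\mathcal{L}_Dv(\tau_n,x_i)\big]-\big[(L_Jv)^{n}_{i}-\mathcal{L}_Jv(\tau_n,x_i)\big],
\]
and I would bound the three brackets separately and add the estimates. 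The first two brackets are handled by classical Taylor expansion; the third, the integral term, is the delicate one.

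\emph{Time and differential parts.} A second-order Taylor expansion of $v$ in $\tau$ at $\tau_n$ gives $\tfrac{v^{n+1}_{i}-v^{n}_{i}}{\Delta t}=\tfrac{\partial v}{\partial\tau}(\tau_n,x_i)+O(\Delta t)$, with constant $\tfrac12\|\partial^2_{\tau}v\|_{\infty}$. For the drift--diffusion bracket I would further split
\[
(L_Dv)^{n+1}_{i}-\mathcal{L}_Dv(\tau_n,x_i)=\big[(L_Dv)^{n+1}_{i}-\mathcal{L}_Dv(\tau_{n+1},x_i)\big]+\big[\mathcal{L}_Dv(\tau_{n+1},x_i)-\mathcal{L}_Dv(\tau_n,x_i)\big].
\]
The first bracket is the purely spatial error at fixed time $\tau_{n+1}$: the central second difference in \eqref{eqa} is $O((\Delta x)^2)$, while the one-sided (upwind) first difference is only $O(\Delta x)$; since $\alpha,\sigma^2,\beta$ are bounded, this bracket is $O(\Delta x)$, and the upwind term is exactly what fixes the overall first order in space. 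The second bracket is the time-level mismatch created by the implicit treatment of $\mathcal{L}_D$; using the smoothness of the $x$-derivatives of $v$ and assuming the coefficients $\alpha(\cdot),\sigma^2(\cdot),\beta(\cdot)$ are Lipschitz in time, it is $O(\Delta t)$.

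\emph{Integral part.} Here $(L_Jv)^{n}_{i}$ and $\mathcal{L}_Jv(\tau_n,x_i)$ sit at the same time level, so only a spatial (quadrature) error remains, which I would split into a domain-truncation error and a quadrature error. The truncation of $\R$ to $[B_l,B_r]$ (the sum over $K_l\le j\le K_r$) is controlled by the exponential moment $\int_{\R}e^{|y|}g_{\ln J}(y)\,dy<\infty$ of the log-normal law, exactly as in the tail estimates of Propositions \ref{pro3} and \ref{pro4}, and is negligible. For the remaining quadrature error I would note that $\nu_j$ integrates the density exactly over the cell $[(j-\tfrac12)\Delta x,(j+\tfrac12)\Delta x]$ while $v_{i+j}-v_i$ samples the smooth factor $y\mapsto v(\tau_n,x_i+y)-v(\tau_n,x_i)$ at the cell midpoint $j\Delta x$; this is a midpoint-type rule for that smooth factor, so the per-cell discrepancy is of order $(\Delta x)^2\nu_j$ (up to a first-moment correction of the same total order), and summation over $j$ using $\sum_{j}\nu_j\le 1$ yields an $O((\Delta x)^2)$ bound, comfortably inside the target.

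\emph{Conclusion and main difficulty.} Adding the three estimates gives $|r^n_i(\Delta t,\Delta x)|\le c(\Delta t+\Delta x)$, with $c$ depending only on finitely many sup-norms of the derivatives of $v$, the bounds on $\alpha,\sigma,\beta$, and the exponential moment of $g_{\ln J}$; compact support of $v$ makes $c$ independent of $(n,i)$. I expect the main obstacle to be the integral term: one must control the quadrature uniformly even though the index range $[K_l,K_r]$ widens as $\Delta x\to 0$, which forces a careful combination of the tail truncation (via the exponential moment) with the midpoint estimate, and one must justify the $O(\Delta t)$ bound on the time-level mismatch of the implicit operator $\mathcal{L}_D$, which is precisely where the time-regularity of the electricity parameters is genuinely used.
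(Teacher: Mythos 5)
Your proposal follows essentially the same path as the paper's proof: the identical three-term decomposition (the paper's $a_1$, $a_2$, $a_3$), a second-order Taylor expansion in $\tau$ for the time difference, the same splitting of the implicit drift--diffusion error into a spatial discretization error at $\tau_{n+1}$ plus an $O(\Delta t)$ time-level mismatch (the paper handles the latter by the mean value theorem, which likewise presupposes the time-regularity of the coefficients that you flag explicitly), and a cell-by-cell comparison of the quadrature sum with the jump integral. The only deviations are refinements within that framework: your midpoint-rule argument sharpens the jump term to $O((\Delta x)^2)$ where the paper's first-order Taylor estimate settles for $O(\Delta x/2)$, and your attribution of the overall first-order spatial accuracy to the upwind difference is in fact more careful than the paper's own bound on $a_2$, so the conclusion $|r^n_i(\Delta t,\Delta x)|\leq c(\Delta t+\Delta x)$ is reached the same way.
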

\begin{proof}
	Let
	\begin{equation}\label{coeficient}
	\begin{cases}
	a_1=\dfrac{v^{n+1}_{i}-v^{n}_{i}}{\Delta t}-\dfrac{\partial v}{\partial \tau}(\tau_n,x_i)\\
	a_2=\left(L_Dv\right)^{n+1}_{i}-\mathcal{L}_Dv(\tau_n,x_i)\\
	a_3=\left(L_Jv\right)^{n}_{i}-\mathcal{L}_Jv(\tau_n,x_i).
	\end{cases}
	\end{equation}
	Using the second order Taylor expansion with respect to $\tau$, we obtain
	$$v^{n+1}_{i}\approx v^{n}_{i}+\Delta t\dfrac{\partial v}{\partial \tau}(\tau_n,x_i)+\dfrac{(\Delta t)^{2}}{2}\dfrac{\partial^2 v}{\partial \tau^2}(\tau_n,x_i).$$ 
	Plugging this relation in the first equation in (\eqref{coeficient}) we get:
	\begin{eqnarray}\label{inga1}
	|a_1|= \dfrac{\Delta t}{2}\left|\dfrac{\partial^2v }{\partial \tau^2}(\tau_n,x_i)\right|\leq\dfrac{\Delta t}{2}\left\|\dfrac{\partial^2v}{\partial \tau^2}(\tau_n,x_i)\right\|_{\infty}=\dfrac{\Delta t}{2}M,
	\end{eqnarray}
	where $M=\left\|\dfrac{\partial^2v}{\partial \tau^2}\right\|_{\infty}$.	We now show that $|a_2|$ is bounded. By the mean-value theorem, there exists $\theta\in ]\tau_n,\tau_{n+1}[$\; such that: $$\mathcal{L}_Dv(\tau_{n+1},x_i)-\mathcal{L}_Dv(\tau_n,x_i)\approx \Delta t \partial_{\tau}\mathcal{L}_Dv(\tau_n+\Delta t\theta,x_i).$$
	If one replaces this relation in the second equation in (\eqref{coeficient}), then:
	\begin{equation}\label{a2}
	|a_2|=\left|\left(L_Dv\right)^{n+1}_{i}-\mathcal{L}_Dv(\tau_{n+1},x_i)+\Delta t \partial_{\tau}\mathcal{L}v(\tau_n+\Delta t\theta,x_i)\right|.
	\end{equation}
	Next, taking Taylor expansion of $v$ of order $4$ gives:
	\begin{eqnarray}
	v^{n+1}_{i+1}&\approx& v^{n}_{i}+\Delta x\dfrac{\partial v}{\partial x}(\tau_{n+1},x_i)+\dfrac{(\Delta x)^2}{2}\dfrac{\partial^2 v}{\partial x^2}(\tau_{n+1},x_i);\nonumber\\
	&&+ \dfrac{(\Delta x)^3}{6}\dfrac{\partial^3 v}{\partial x^3}(\tau_{n+1},x_i)+\dfrac{(\Delta x)^4}{24}\dfrac{\partial^4 v}{\partial x^4}(\tau_{n+1},x_i)\nonumber
	\end{eqnarray}
	\begin{eqnarray}
	v^{n+1}_{i-1}&\approx& v^{n+1}_{i}-\Delta x\dfrac{\partial v}{\partial x}(\tau_{n+1},x_i)+\dfrac{(\Delta x)^2}{2}\dfrac{\partial^2 v}{\partial x^2}(\tau_{n+1},x_i)\nonumber\\
	&&- \dfrac{(\Delta x)^3}{6}\dfrac{\partial^3 v}{\partial x^3}(\tau_{n+1},x_i)+\dfrac{(\Delta x)^4}{24}\dfrac{\partial^4 v}{\partial x^4}(\tau_{n+1},x_i),\nonumber
	\end{eqnarray}
	hence
	\begin{eqnarray}
	\dfrac{v^{n+1}_{i+1}-2v^{n+1}_{i}+v^{n+1}_{i-1}}{(\Delta x)^2}\approx
	\left((\Delta x)^2\dfrac{\partial^2v}{\partial x^2}+\dfrac{(\Delta x)^3}{12}\dfrac{\partial^4v}{\partial x^4}\right).\nonumber
	\end{eqnarray}
	Putting this last result in (\eqref{a2}) gives: 
	\begin{eqnarray}
	|a_2|&\leq& \dfrac{(\Delta x)^2}{6}|f(\tau_{n+1},x_i)| \left|\dfrac{\partial^3v}{\partial x^3}+\dfrac{(\Delta x)}{4}\dfrac{\partial^4v}{\partial x^4}\right|+\dfrac{(\Delta x)^2}{6}\dfrac{\sigma^2(T-\tau_{n+1})}{4}\left|\dfrac{\partial^4v}{\partial x^4}\right|\nonumber\\&&+ \Delta t |\partial_{\tau}\mathcal{L}v(\tau_n+\Delta t\theta,x_i)|,
	\end{eqnarray}
	since $\alpha$, $\sigma$, $\beta$ and $f$ are bounded functions. Also, since the derivatives $\partial^{m+n}v\big/\partial\tau^n\partial x^m$ are bounded, it implies:
	\begin{eqnarray}\label{inga2}
	|a_2|\leq (\Delta x)^2M_1+\Delta tM_2.
	\end{eqnarray}
	One also has: 
	\begin{eqnarray*}
		|a_3|&=&\left| \sum\limits_{j=K_l}^{K_r}\nu_j(v^{n}_{i+j}-v^{n}_{i})-\int_{B_l}^{B_r}(v(\tau,x_i+y)-v(\tau,x_i))g_{\ln J}(y)dy\right|\\
		&=&\left| \sum\limits_{j=K_l}^{K_r}\int_{(j-1/2)\Delta x}^{(j+1/2)\Delta x}(v^{n}_{i+j}-v^{n}_{i}) g_J(y)dy
		-\int_{B_l}^{B_r}(v(\tau,x_i+y)
		-v(\tau,x_i))g_J(y)dy\right|.
	\end{eqnarray*}
	Since  $[B_l,B_r]\subset[(K_l-1/2)\Delta x,(K_r+1/2)\Delta x]$, then we have:
	\begin{equation*}
	|a_3|\leq\left| \sum\limits_{j=K_l}^{K_r}\int_{(j-1/2)\Delta x}^{(j+1/2)\Delta x}(v(\tau_n,x_i+y_j)-v(\tau_n,x_i+y) g_{\ln J}(y)dy\right|,
	\end{equation*}
	and using Taylor's expansion of order one, we get:
	\begin{equation*}
	|a_3|\leq 	\left| \sum\limits_{j=K_l}^{K_r}\int_{(j-1/2)\Delta x}^{(j+1/2)\Delta x}(y_j-y)\dfrac{\partial v}{\partial x}(\tau_n,x_i+\psi)g_{\ln J}(y)dy\right|,\,\ \psi \in ]x_i+y,x_i+y_j[.
	\end{equation*}
	From the scheme, we have $\Delta x(j-1/2)\leq y \leq \Delta x(j+1/2)$, which leads to $-\dfrac{\Delta x}{2}\leq y_j-y\leq \dfrac{\Delta x}{2} $. Hence,
	\begin{eqnarray}\label{inga3}
	|a_3|&\leq& \dfrac{\Delta x}{2}\left| \sum\limits_{j=K_l}^{K_r}\int_{(j-1/2)\Delta x}^{(j+1/2)\Delta x}\dfrac{\partial v}{\partial x}(\tau_n,x_i+\psi) g_{\ln J}(y)dy\right|\nonumber\\
	&\leq& \dfrac{\Delta x}{2}\left\|\dfrac{\partial v}{\partial x}\right\|_{\infty} \left| \sum\limits_{j=K_l}^{K_r}\int_{(j-1/2)\Delta x}^{(j+1/2)\Delta x} g_{\ln J}(y)dy\right|=\dfrac{\Delta x}{2}M_3,
	\end{eqnarray}
	where $M_3=\left\|\dfrac{\partial v}{\partial x}\right\|_{\infty} \left| \sum\limits_{j=K_l}^{K_r}\int_{(j-1/2)\Delta x}^{(j+1/2)\Delta x} g_{\ln J}(y)dy\right|$. Finally, \eqref{inga1}, \eqref{inga2} and \eqref{inga3} imply
	$$|r^{n}_{i}(\Delta t,\Delta x)|\leq \Delta t\left(\dfrac{M}{2}+M_1\right)+\Delta x\left(\Delta xM_2+\dfrac{M_3}{2}\right)\rightarrow 0$$ 
	as $(\Delta t, \Delta x)\rightarrow(0,0)$. 
\end{proof}
\subsection{ Stability and monotonicity} 
Two properties are important to show convergence to viscosity solutions: stability and monotonicity of scheme.
\begin{definition}\textbf{Stability}\\
	The scheme \eqref{eqa} is stable if, and only if, for some bounded initial conditions, its solution exists and is bounded independently of $\Delta t$ and $\Delta x$, and uniformly bounded on $[0,T]\times\R$. That is to say,
	$$\exists C>0,\;\; \forall \Delta t>0,\;\; \forall \Delta x>0 ,\;\; i\in \Z,\;\;n\in\{0,...,M\},\;\;|u^{n}_{i}|\leq C.$$
\end{definition}
We will say that a given vector $v$ is positive if all its elements are positive. We write $u\geq v$ if $u-v\geq 0$. In this part we show the stability property of the scheme, which in turn implies the discrete comparison principle, a property which has an important interpretation in finance. This property makes possible the fact that the options values computed using our numerical scheme will check arbitrage inequalities: Inequality between payoffs leading to inequality between options values.

\begin{proposition}(\textbf{Stability and the discrete comparison principle})\label{Stability}\\
	If $\Delta t\leq 1\big/\sum_{j=K_l}^{K_r}\nu_j$, the scheme \eqref{eqa} is stable, and hence verifies the discrete comparison principle:
	$$u^0\geq v^0 \Longrightarrow \forall n\in \N^*,\;\; u^n\geq v^n.$$
\end{proposition}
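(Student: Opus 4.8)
The plan is to write the numerical scheme in a form that exhibits $u^{n+1}_i$ as an explicit update rule, and then show that the map $u^n \mapsto u^{n+1}$ is \emph{monotone} (order-preserving) and a \emph{contraction} in the $\ell^\infty$-norm under the stated time-step restriction. Monotonicity yields the discrete comparison principle directly, and $\ell^\infty$-contractivity yields stability. Since the scheme is explicit in the jump part and implicit in the differential part, the first genuine task is to make the update explicit: collecting the $u^{n+1}$ terms from $(L_D u)^{n+1}_i$ on the left, one obtains a linear system $(I - \Delta t\, L_D) u^{n+1} = u^n + \Delta t\, (L_J u)^n$. I would verify that $I - \Delta t\, L_D$ is an M-matrix (diagonally dominant with nonpositive off-diagonal entries), so that its inverse is nonnegative; this is what transmits monotonicity and the maximum principle through the implicit step.

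\textbf{The key steps, in order.} First I would assemble the full one-step coefficients. Writing the scheme (in the case $f<0$ treated in the text) componentwise, the coefficient of $u^{n}_{i\pm1}$, $u^{n}_i$ and of the jump terms $u^n_{i+j}$ must all be identified. For the explicit jump part, the diagonal contribution is $1 - \Delta t \sum_{j=K_l}^{K_r}\nu_j$ and the off-diagonal contributions are $\Delta t\,\nu_j \ge 0$; the condition $\Delta t \le 1/\sum_{j=K_l}^{K_r}\nu_j$ is exactly what forces the diagonal coefficient to remain nonnegative, so that $u^{n+1}$ is built as a combination of the $u^n$-values with nonnegative weights. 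Second, I would check that these nonnegative weights, together with the implicit operator, sum to something controlled: because $L_D$ contains no zeroth-order term and $L_J$ has total mass $\sum_j \nu_j$ cancelling between the $u^n_{i+j}$ and $-u^n_i$ contributions, the row sums collapse and one gets $\|u^{n+1}\|_\infty \le \|u^n\|_\infty$. Third, iterating gives $\|u^n\|_\infty \le \|u^0\|_\infty = \|H(S_0 e^{\cdot})\|_\infty$ for all $n$, which is the stability bound with $C = \|u^0\|_\infty$, independent of $\Delta t$ and $\Delta x$. Finally, for the comparison principle I would apply the \emph{same} monotone update to the difference $u^n - v^n$: if $u^0 \ge v^0$, then since every weight in the update (both the nonnegative explicit jump weights and the nonnegative entries of $(I-\Delta t L_D)^{-1}$) preserves the order, induction on $n$ gives $u^n \ge v^n$ for all $n$.

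\textbf{The main obstacle} I expect is handling the implicit differential step rigorously, namely proving that $(I - \Delta t\, L_D)^{-1}$ exists and is entrywise nonnegative. This requires showing $I - \Delta t\, L_D$ is strictly diagonally dominant with the correct sign pattern — the upwind choice in \eqref{eqa} is precisely designed to guarantee the off-diagonal entries of $L_D$ have the right sign regardless of the sign of $f$, so the nonnegativity survives for both the $f<0$ and $f\ge 0$ cases even though the text writes out only one. A secondary subtlety is the boundary truncation: the zero extension outside $\{0,\dots,N-1\}$ means the row sums at indices near the boundary are $\le 1$ rather than $=1$, which only \emph{helps} the $\ell^\infty$ bound (the weights still sum to at most one), so the estimate is unaffected. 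Once the M-matrix property is in hand, monotonicity, the $\ell^\infty$-contraction, and the comparison principle all follow from the single observation that the combined one-step operator is a convex-type combination with nonnegative weights summing to at most one.
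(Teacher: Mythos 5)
Your proposal is correct, and it starts from the same place as the paper --- rewriting the scheme as the tridiagonal system
\begin{equation*}
-c\,\Delta t\, u^{n+1}_{i-1}+(1+a\Delta t)\,u^{n+1}_{i}-b\,\Delta t\, u^{n+1}_{i+1}
=\Bigl(1-\Delta t\sum_{j=K_l}^{K_r}\nu_j\Bigr)u^{n}_{i}+\Delta t\sum_{j=K_l}^{K_r}\nu_j\,u^{n}_{i+j},
\end{equation*}
with $a,b,c\ge 0$, $a=b+c$, and the condition $\Delta t\le 1\big/\sum_j\nu_j$ making the explicit weights nonnegative --- but the mechanism by which you push the estimate through the implicit step is genuinely different. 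The paper uses strict diagonal dominance only to obtain existence and uniqueness of $u^{n+1}$, and then runs two separate induction-plus-contradiction arguments at an extremal index $i_0$: one where $|u^{n+1}_{i_0}|=\| u^{n+1}\|_{\infty}$ to get $\| u^{n+1}\|_\infty\le\| u^{n}\|_\infty\le\| H\|_\infty$ (stability), and a second one, in the following proposition, where $w^{n+1}_{i_0}=\inf_i w^{n+1}_i$ for $w^n=u^n-v^n$ (the comparison principle). You instead invoke inverse-positivity: the matrix $I-\Delta t\,L_D$ has positive diagonal, nonpositive off-diagonal entries and strict dominance, hence is an M-matrix with $(I-\Delta t\,L_D)^{-1}\ge 0$ entrywise, and since $a=b+c$ forces interior row sums equal to $1$ (boundary rows give row sums $\ge 1$, hence $(I-\Delta t\,L_D)^{-1}\mathds{1}\le\mathds{1}$), the full one-step operator is a monotone map of $\ell^\infty$ norm at most one; stability and comparison then both follow from this single structural fact by linearity and induction. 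What your route buys is unification and economy --- one lemma replaces two contradiction arguments, and it makes transparent that the same hypothesis on $\Delta t$ drives both conclusions; what it costs is the appeal to the M-matrix inverse-positivity theorem, which the paper's elementary extremal-index argument avoids (that argument is, in effect, an inline proof of exactly this positivity property in the special tridiagonal case at hand). Your treatment of the two upwind cases and of the zero extension outside the computational domain is also correct and, if anything, more explicit than the paper's, which writes out only one sign of $f$ and dismisses the other in a sentence.
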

\begin{proof}
	Firstly, consider \eqref{eqa} in the form:
	\begin{eqnarray}\label{pv1}
	-cu^{n+1}_{i-1}+(1+a\Delta t)u^{n+1}_{i}-b\Delta t
	u^{n+1}_{i+1}=\left(1-\Delta t\sum_{j=K_l}^{K_r}\nu_j\right)u^{n}_{i}+\Delta t\sum_{j=K_l}^{K_r}u^{n}_{i+j}\nu_j,
	\end{eqnarray}
	where
	\begin{equation}
	\left\{\begin{array}{ll}
	c&=\dfrac{1}{2}\dfrac{1}{(\Delta x)^2}\sigma^2(T-\tau_{n+1})\geq 0\\
	a&=\dfrac{1}{\Delta x}f(\tau_{n+1},x_i)+\dfrac{1}{(\Delta x)^2}\sigma^2(T-\tau_{n+1})\geq 0\\
	b&=\dfrac{1}{\Delta x}f(\tau_{n+1},x_i)+\dfrac{1}{2}\dfrac{1}{(\Delta x)^2}\sigma^2(T-\tau_{n+1})\geq 0.
	\end{array}\right.
	\end{equation}
	The positivity of $a$ and $b$ arises from $g$ being positive. If $g$ is not positive, we change the approximation of the first-order derivatives in the scheme used. In either case, one has: 
	$$ a=b+c \;\;\Rightarrow \;\;a\Delta t=b\Delta t+c\Delta t\;\; \Rightarrow \;\; 1+a\Delta t>(b+c)\Delta t .$$
	It follows that $|1+a\Delta t|> |-c\Delta t|+|-b\Delta t|$, implying the matrix of linear system on $(u^{n+1}_{0},...,u^{n+1}_{N}) $ has a strict dominant diagonal, hence invertible. Therefore, the solution of the linear system exists and is unique.	We now show by mathematical induction that this solution is bounded. That is, if $\| H\|_{\infty}\leq \infty$ is the bounded initial condition, then, $ \forall n\in \N$,
	
	\begin{equation}\label{pv}
	\| u^{n}\|_{\infty}\leq\| H\|_{\infty}.  
	\end{equation}
	By definition of $u^0$, we have $\| u^{0}\|_{\infty}\leq\| H\|_{\infty}$. Assume \eqref{pv} holds for $n$. To show that it holds for $n+1$, we suppose on the contrary that $ \| u^{n+1}\|_{\infty}>\| H\|_{\infty} $. By the definition of $\| . \|_{\infty}$,  $\exists i_0\in \{0,...,n\}$ such that $ | u^{n+1}_{i_0}|=\| u^{n+1}\|_{\infty}  $, and $ \forall i\in \Z ,$ $\;| u^{n+1}_{i}|<|u^{n+1}_{i_0}|
	$.\\
	Since $a=b+c$, we can write, 
	\begin{equation}
	\| u^{n+1}\|_{\infty}=|u^{n+1}_{i_0}|
	=-c\Delta t|u^{n+1}_{i_0}|+(1+a\Delta t)|u^{n+1}_{i_0}|-b\Delta t|u^{n+1}_{i_0}|.                                                 
	\end{equation}
	Moreover, as $|u^{n+1}_{i_0-1}|<|u^{n+1}_{i_0}|$ and $|u^{n+1}_{i_0+1}|<|u^{n+1}_{i_0}|$ we have
	
	\begin{equation}\label{inq}
	\| u^{n+1}\|_{\infty}
	\leqslant-c\Delta t|u^{n+1}_{i_0-1}|+(1+a\Delta t)|u^{n+1}_{i_0}|-b\Delta t|u^{n+1}_{i_0+1}|.
	\end{equation}
	Using \eqref{pv1} and \eqref{inq}, and the fact that $\Delta t\leq1\big/\sum_{j=K_l}^{K_r}\nu_j$, we obtain:
	\begin{eqnarray*}
		\| u^{n+1}\|_{\infty}
		&\leq&\left|\left(1-\Delta t\sum_{j=K_l}^{K_r}\nu_j\right)u^{n}_{i_0}+\Delta t\sum_{j=K_l}^{K_r}u^{n}_{i_0+j}\nu_j\right|\nonumber\\
		&\leq&\left(1-\Delta t\sum_{j=K_l}^{K_r}\nu_j\right)|u^{n}_{i_0}|+\Delta t\sum_{j=K_l}^{K_r}|u^{n}_{i_0+j}\nu_j|\nonumber\\
		&\leq&\left(1-\Delta t\sum_{j=K_l}^{K_r}\nu_j\right)\| u^{n}\|_{\infty}+\Delta t\sum_{j=K_l}^{K_r}\nu_j\| u^{n}\|_{\infty}\nonumber\\
		&=&\| u^{n}\|_{\infty}\leq \|H\|_{\infty},
	\end{eqnarray*}                                                   
	which contradicts our assumption. Hence $\|u^{n}\|_{\infty}\leq \| H\|_{\infty}$.
\end{proof} 
\begin{proposition}(\textbf{Monotonicity})\\
	Let $u^n$ and $v^n$ be two solutions to \eqref{eqa} corresponding to some initial conditions $f$ and $h$ respectively, satisfying $f(x)\geq h(x)$ $\;\forall x\in\R$. If $\Delta t\leq1\big/\sum_{j=K_l}^{K_r}\nu_j$, then $u^n\geq v^n$, $\forall n\in \N$.
\end{proposition}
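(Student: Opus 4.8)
The plan is to reduce this statement to the discrete comparison principle already established in Proposition~\ref{Stability} by exploiting the linearity of the scheme, and then to run a single induction on the time index $n$ driven by a discrete minimum principle. First I would set $w^n := u^n - v^n$. Since the scheme \eqref{eqa} (equivalently \eqref{fds}) is linear and its coefficients $c$, $a$, $b$ and the quadrature weights $\nu_j$ do not depend on the solution itself, $w^n$ satisfies exactly the same implicit time-stepping relation, now with initial datum $w^0 = f - h \geq 0$ by hypothesis. It therefore suffices to prove that the sign condition $w^0 \geq 0$ propagates in time, i.e. $w^n \geq 0$ for every $n\in\N$, which is precisely the implication $w^0 \geq 0 \Rightarrow w^n \geq 0$ contained in Proposition~\ref{Stability}.

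Next I would carry out the induction. The base case is the hypothesis $w^0 \geq 0$. Assuming $w^n \geq 0$, I substitute $w^n$ into the implicit form \eqref{pv1}; the right-hand side then reads
\[
\Big(1 - \Delta t \sum_{j=K_l}^{K_r}\nu_j\Big)\, w^n_i \;+\; \Delta t \sum_{j=K_l}^{K_r} \nu_j\, w^n_{i+j}.
\]
This quantity is nonnegative: the scalar weight $1 - \Delta t\sum_{j}\nu_j$ is $\geq 0$ under the standing assumption $\Delta t \leq 1\big/\sum_{j=K_l}^{K_r}\nu_j$, the quadrature weights $\nu_j$ are nonnegative, and $w^n_i, w^n_{i+j}\geq 0$ by the induction hypothesis.

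Then I would show that the implicit solve preserves this nonnegativity. The left-hand side of \eqref{pv1} is governed by a tridiagonal matrix with strictly positive diagonal $1+a\Delta t$ and nonpositive off-diagonal entries (recall $a,b,c\geq 0$ from the proof of Proposition~\ref{Stability}), which is strictly diagonally dominant because $a=b+c$ yields $1+a\Delta t > (b+c)\Delta t$. Such a matrix is an M-matrix, so its inverse is entrywise nonnegative and maps the nonnegative right-hand side to a nonnegative $w^{n+1}$. Equivalently, and more elementarily, one argues by contradiction with a discrete minimum principle: were $w^{n+1}$ to attain a strictly negative minimum, it would do so at some interior index $i_0$ (the scheme extends $w$ by zero off $\{0,\dots,N-1\}$, and those zeros cannot be a negative minimum), where the inequalities $w^{n+1}_{i_0\pm 1}\geq w^{n+1}_{i_0}$ combined with the scheme at $i_0$ and $a=b+c$ give $w^{n+1}_{i_0}\geq \big(1-\Delta t\sum_j\nu_j\big)w^n_{i_0}+\Delta t\sum_j\nu_j\,w^n_{i_0+j}\geq 0$, a contradiction. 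This closes the induction and yields $u^n\geq v^n$ for all $n$.

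The main obstacle is exactly the nonnegativity-preservation of the implicit step: one must verify that inverting the tridiagonal left-hand matrix does not destroy the sign of the (already nonnegative) right-hand side, which is where the sign pattern and the strict diagonal dominance of Proposition~\ref{Stability} are indispensable. The accompanying point requiring care is ensuring the extremum in the discrete minimum principle is genuinely attained at an interior node, a fact guaranteed by the zero extension of the scheme outside $\{0,\dots,N-1\}$.
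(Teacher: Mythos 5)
Your proposal is correct and follows essentially the same route as the paper: set $w^n = u^n - v^n$, use linearity of the scheme, and propagate nonnegativity by induction via a discrete minimum principle, deriving a contradiction at a negative interior minimum $i_0$ using $w^{n+1}_{i_0\pm 1}\geq w^{n+1}_{i_0}$, the identity $a=b+c$, and the step restriction $\Delta t\leq 1\big/\sum_{j=K_l}^{K_r}\nu_j$. Your additional M-matrix observation is a valid equivalent repackaging of the same sign-preservation fact, but the core argument coincides with the paper's proof.
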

\begin{proof}
	Define $w^n=u^n-v^n$. We show that $w^n\geq 0$ $\;\forall n\in \N$. As in Proposition \ref{Stability}, we proceed by induction. By construction, we have $w^0_i=f(x_i)-b(x_i)\geq 0$, $\forall i\in \Z$. Let $w^n\geq 0$, and suppose that: $\inf\limits_{i\in \Z}w^{n+1}_{i}<0$. Since $\forall i\in \Z\backslash\{0,...,N\}$, $w^{n+1}_{i}=0$, this implies that $\exists i_0\in\{0,...,N\}$ such that $ w^{n+1}_{i_0}=\inf\limits_{i\in \Z}w^{n+1}_{i}$. Using \eqref{pv1} and $\Delta t\leq \dfrac{1}{\sum_{j=K_l}^{K_r}\nu_j}$, we have that
	\begin{eqnarray*}
		\inf\limits_{i\in \Z}w^{n+1}_{i}=w^{n+1}_{i_0}&=&-c\Delta tw^{n+1}_{i_0}+(1+a\Delta t)w^{n+1}_{i_0}-b\Delta tw^{n+1}_{i_0}\\
		&\geq&-c\Delta tw^{n+1}_{i_0-1}+(1+a\Delta t)w^{n+1}_{i_0}-b\Delta tw^{n+1}_{i_0+1}\\
		&=&\left(1-\Delta t\sum_{j=K_l}^{K_r}\nu_j\right)w^{n}_{i_0}+\Delta t\sum_{j=K_l}^{K_r}w^{n}_{i_0+j}\nu_j\geq 0,
	\end{eqnarray*}
	which is a contradiction. Therefore, $\inf\limits_{i\in \Z}w^{n+1}_{i}\geq 0$, and hence $w^{n+1}\geq 0$. 
\end{proof}
\subsection{ Convergence}
As  proved above, our scheme \eqref{fds} is locally consistent, stable, monotone and verifies the discrete comparison principle. In the usual approach to the convergence of finite difference schemes for PDE's, consistency and stability ensure convergence under regularity assumptions on the solution. These conditions are not sufficient here because the solution may not be smooth, and higher order derivatives may not exist. This is where the notion of viscosity solutions are introduced. In the second order parabolic/elliptic PDEs Barles and Souganidis \cite{gb} showed that for elliptic (or parabolic) PDEs, any locally consistent, stable and monotone finite difference scheme converge uniformly, on each compact subset $[0,T]\times\R$, to the unique continuous viscosity solution. Rama Cont and Ekaterina Voltchkova in \cite{eka} showed that the solution of a numerical scheme converges uniformly on each compact subset of $[0,T]\times\R$ to the unique viscosity solution even when the subsolution and the supersolution constructed using a numerical scheme may not have uniform continuity properties. The PIDE studied in this paper relies on the same assumptions as in \cite{eka}, except that here, we are in the case of a finite activity measure since the sizes of the jumps is log-normal. Then we used the same technics to showed the convergence of the explicit-implicit scheme \eqref{fds} to the viscosity solution of problem \eqref{pide}. 
\begin{proposition}(\textbf{Convergence of the explicit-implicit scheme})\\
Let $H$ be a bounded piecewise continuous initial condition, then solution $u^{(\Delta t,\Delta x)}$ of the numerical scheme converges uniformly on each compact subset of $[0,T]\times\R$ to the solution $u$ of continuous problem \eqref{pide}.
\end{proposition}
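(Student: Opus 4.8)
The plan is to follow the Barles--Souganidis viscosity framework \cite{gb} as adapted to integro-differential equations in \cite{eka,ekafinite}: once a scheme is consistent, stable (hence uniformly bounded), monotone, and the limiting PIDE enjoys a strong comparison principle, its solutions converge locally uniformly to the unique viscosity solution. All four ingredients are already in hand: local consistency from the Consistency proposition, uniform boundedness and the discrete comparison principle from Proposition \ref{Stability}, monotonicity from the Monotonicity proposition, and existence, uniqueness and comparison for \eqref{pide} from the second Proposition (via \cite{alvarez}). The remaining work is to assemble these into a convergence argument through half-relaxed limits.

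First I would define the upper and lower half-relaxed limits
\begin{equation*}
\bar{u}(\tau,x)=\limsup_{\substack{(\Delta t,\Delta x)\to(0,0)\\ (\tau_n,x_i)\to(\tau,x)}} u^{(\Delta t,\Delta x)}(\tau_n,x_i),\qquad \underline{u}(\tau,x)=\liminf_{\substack{(\Delta t,\Delta x)\to(0,0)\\ (\tau_n,x_i)\to(\tau,x)}} u^{(\Delta t,\Delta x)}(\tau_n,x_i).
\end{equation*}
By the stability estimate $\|u^n\|_\infty\leq\|H\|_\infty$ of Proposition \ref{Stability}, these limits are finite on $[0,T]\times\R$; by construction $\bar{u}$ is upper semicontinuous, $\underline{u}$ is lower semicontinuous, and $\underline{u}\leq\bar{u}$ everywhere. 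The goal is then to prove the reverse inequality $\bar{u}\leq\underline{u}$, which forces $\bar{u}=\underline{u}=:u$ to be continuous and to coincide with the viscosity solution, after which the equality of the two semicontinuous envelopes is exactly locally uniform convergence on compacts.

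The technical heart is to show that $\bar{u}$ is a viscosity subsolution and $\underline{u}$ a viscosity supersolution of \eqref{pide}. For the subsolution property I would take $\phi\in C^{1,2}$ touching $\bar{u}$ from above at an interior point $(\tau_0,x_0)$, extract by a diagonal argument grid points $(\tau_{n_k},x_{i_k})\to(\tau_0,x_0)$ at which $u^{(\Delta t,\Delta x)}-\phi$ attains a local maximum with $u^{(\Delta t,\Delta x)}(\tau_{n_k},x_{i_k})\to\bar{u}(\tau_0,x_0)$, then insert $\phi$ plus a constant into the scheme at those points. Monotonicity lets me dominate the discrete operator applied to $u^{(\Delta t,\Delta x)}$ by the same operator applied to $\phi$, and local consistency lets me pass to the limit, yielding $\partial_\tau\phi-(\mathcal{L}_D+\mathcal{L}_J)\phi\leq 0$ at $(\tau_0,x_0)$; the supersolution inequality for $\underline{u}$ is symmetric. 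I expect the main obstacle to be the nonlocal term $\mathcal{L}_J$: unlike the differential part, its discretization couples grid values far from $x_{i_k}$, so passing to the limit requires controlling the trapezoidal sum against $\int_{\R}[\phi(\tau,x+y)-\phi(\tau,x)]g_{\ln J}(y)\,dy$ uniformly, using that $\nu$ is a finite (log-normal) measure together with the truncation and localization estimates of Propositions \ref{pro3} and \ref{pro4} to discard the contribution of large $|y|$. This is precisely where the finite-activity assumption simplifies the analysis relative to \cite{eka}.

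Finally, with $\bar{u}$ a subsolution and $\underline{u}$ a supersolution that agree at $\tau=0$ with $H$ (which needs a short separate argument at the possibly discontinuous points of the piecewise continuous datum, e.g.\ squeezing $H$ between continuous barriers and invoking comparison for each), the strong comparison principle for \eqref{pide} yields $\bar{u}\leq\underline{u}$. Hence $\bar{u}=\underline{u}$ is the unique viscosity solution $u$, and the coincidence of the upper and lower half-relaxed limits gives uniform convergence of $u^{(\Delta t,\Delta x)}$ to $u$ on every compact subset of $[0,T]\times\R$.
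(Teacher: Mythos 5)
Your strategy is the classical Barles--Souganidis one: form the half-relaxed limits $\overline{u},\underline{u}$, prove that they are respectively a viscosity subsolution and supersolution of \eqref{pide} via monotonicity plus consistency, and then invoke a strong comparison principle to get $\overline{u}\leq\underline{u}$. That is a genuinely different route from the paper's. The paper (following Cont--Voltchkova) never proves any sub/supersolution property of the half-relaxed limits and never uses a comparison principle for semicontinuous viscosity solutions. Instead it works at the discrete level: it defines super/subsolutions of the fixed-point form \eqref{rfds} of the scheme, proves a discrete comparison lemma (Lemma \ref{lemma1}, from monotonicity), shows by consistency that any \emph{smooth} supersolution $w$ of \eqref{pide}, perturbed to $w+q(1+\tau)$, becomes a discrete supersolution (Lemma \ref{lemma2}), deduces $z\leq\underline{u}\leq\overline{u}\leq w$ (Lemma \ref{lemma3}), and finally closes the sandwich by choosing $w(\tau,x)=\E[\overline{H}(Y^{x}_{\tau})]$ and $z(\tau,x)=\E[\underline{H}(Y^{x}_{\tau})]$ for smooth $\underline{H}\leq H\leq\overline{H}$ and showing $w-z$ is small.

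The gap in your proposal sits in the final step, and it is not the ``short separate argument'' you make it out to be. You assert that $\overline{u}$ and $\underline{u}$ agree with $H$ at $\tau=0$; this is false at the discontinuity points $\zeta_1,\dots,\zeta_I$ of the piecewise continuous datum. As the paper's concluding Remark states, the scheme does \emph{not} converge to the initial condition at those points, so in general $\overline{u}(0,\zeta_j)>\underline{u}(0,\zeta_j)$. Consequently no comparison principle, however strong, can yield $\overline{u}\leq\underline{u}$ on all of $[0,T]\times\R$: its hypothesis (ordering on the parabolic boundary) fails exactly at those points, and its conclusion would contradict the Remark. The only way to conclude is to show that the defect at $\tau=0$ does not propagate to $\tau>0$, and that is a probabilistic fact, not a viscosity-theoretic one: the paper bounds $w(\tau,x)-z(\tau,x)\leq\delta\,\Q\bigl(Y^{x}_{\tau}\in\bigcup_{j=1}^{I}(\zeta_j-\epsilon,\zeta_j+\epsilon)\bigr)+\epsilon$ and uses the absolute continuity of the law of $Y^{x}_{\tau}$ to send this to zero. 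Your proposed repair (``squeezing $H$ between continuous barriers and invoking comparison for each'') is, once written out, precisely this argument, but your sketch contains none of its ingredients: the Feynman--Kac representation of the barrier solutions, the discrete comparison of Lemma \ref{lemma1}, and the density argument for $Y^{x}_{\tau}$. A secondary weakness: you treat a strong comparison principle for semicontinuous sub/supersolutions of \eqref{pide} as ``already in hand'' from the earlier propositions, but the paper only cites Alvarez--Tourin for existence and uniqueness of (continuous) viscosity solutions; the semicontinuous comparison result your argument needs is never established anywhere in the paper.
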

\begin{proof}
	Define
\begin{equation}\label{supsub}
\begin{cases}
\underline{u}(\tau,x)=\liminf\limits_{(\Delta t,\Delta x)\rightarrow (0,0) \!(t,y)\rightarrow(\tau,x)}u^{(\Delta t,\Delta x)}(t,y)\;\; and\\
\overline{u}(\tau,x)=\limsup\limits_{(\Delta t,\Delta x)\rightarrow(0,0) \! (t,y)\rightarrow(\tau,x)}u^{(\Delta t,\Delta x)}(t,y).
\end{cases}
\end{equation}
The aim of this proof consists to show the following equalities $\underline{u}(\tau,x)=\overline{u}(\tau,x)=u(\tau,x)$. Before showing this equalities some preparatory results are needed.\\
We start by giving an equivalent expression for (\ref{fds}).
\begin{eqnarray}\label{rfds}
u(\tau_n,x_i)&=&F[u(\tau_n-\Delta t,.)](x_i), \, n=1,...,M, \, i\in{0,...,N},\nonumber\\
u(0,x_i)&=&H(S_0e^{x_i}), \,i\in{0,...,N},\\
u(\tau_n,x_i))&=&0, \, n=0,...,M, \, \notin{0,...,N}\nonumber.
\end{eqnarray}
One can define super and subsolution of \ref{rfds} by the following definition
\begin{definition}
A function w is a supersolution of \ref{rfds} if
\begin{eqnarray}\label{supfds}
w(\tau_n,x_i)&\geq&F[w(\tau_n-\Delta t,.)](x_i), \, n=1,...,M, \, i\in{0,...,N},\nonumber\\
w(0,x_i)&\geq&H(S_0e^{x_i}), \,i\in{0,...,N},\\
w(\tau_n,x_i))&\geq&0, \, n=0,...,M, \, \notin{0,...,N}\nonumber.
\end{eqnarray}
A function z is a subsolution of \ref{rfds} if
\begin{eqnarray}\label{subfds}
z(\tau_n,x_i)&\leq&F[z(\tau_n-\Delta t,.)](x_i), \, n=1,...,M, \, i\in{0,...,N},\nonumber\\
z(0,x_i)&\leq&H(S_0e^{x_i}), \,i\in{0,...,N},\\
z(\tau_n,x_i))&\leq&0, \, n=0,...,M, \, i\notin{0,...,N}\nonumber.
\end{eqnarray} 
\end{definition}

To Avoid the problem of uniform continuity and smoothness which may not hold for $\underline{u}$ and $\overline{u}$ define in (\ref{supsub}), it is convenient to consider smooth super and subsolutions of \ref{pide} and super and subsolutions of \ref{rfds}, and to derive the link with $\underline{u}$ and $\overline{u}$. The following results extends  
the  comparison principle to the super and subsolutions. 
\begin{lemma}\label{lemma1}
For any supersolution $w$ and subsolution $z$ of \ref{rfds} we have
$z\leq u\leq w$.
\end{lemma}
\begin{proof} 
For ($i\notin{0,...,N}$) or ($n=0$ and $i \in{0,...,N}$) the above inequalities are satisfied by definition. For $n=1,...,M$, 
\, $i\in{0,...,N}$ from monotonicity of the scheme we have
\begin{eqnarray*}
	z(\tau_n,x_i)&\leq& F[z(\tau_n-\Delta t,.)](x_i)\leq F[u(\tau_n-\Delta t,.)](x_i)=u(\tau_n,x_i)\\
	&=& F[u(\tau_n-\Delta t,.)](x_i)\leq  F[w(\tau_n-\Delta t,.)](x_i)\leq w(\tau_n,x_i).
\end{eqnarray*}
\end{proof}
\begin{lemma}\label{lemma2}
Let $w$ and $z$ be a smooth supersolution and subsolution of \ref{pide} respectively. Then for all $\epsilon$, there exists $\Delta>0$ such that\\
$\forall \Delta t, \Delta x,\leq\Delta $, \, $\forall n\geq 0$, \, $\forall i\in \mathbb{Z},$\, $z(\tau_n,x_i)-\epsilon<u(\tau_n,x_i)<w(\tau_n,x_i)+\epsilon$
\end{lemma}
\begin{proof}
Choose q such that $0<q(T+1)<\epsilon$ and let $\tilde{w}(\tau,x)=w(\tau,x)+q(1+\tau)$, notice that a constant function is always a solution. In fact one can see from the definition that the scheme is linear.\\
If  $i\notin{0,...,N}$, we have
\begin{eqnarray}\label{l1}
 \tilde{w}(\tau_n,x_i)=w(\tau_n,x_i)+q(\tau+1)\geq q\geq 0.
\end{eqnarray}
If  $n=0$ and $i \in{0,...,N}$, 
\begin{eqnarray}
\tilde{w}(0,x_i)=w(0,x_i)+q\geq H(S_0e^{x_i}).
\end{eqnarray}
If $n\geq 1$, \, $i\in{0,...,N}$  from the consistency of the scheme we obtain 
\begin{eqnarray}\label{l2}
\dfrac{\tilde{w}(\tau_n,x_i)-\tilde{w}(\tau_n-\Delta t,x_i)}{(\Delta t)}-
L_D\tilde{w}(\tau_n-\Delta t,x_i)-L_J\tilde{w}(\tau_n-\Delta t,x_i)&=&
\dfrac{w(\tau_n,x_i)-w(\tau_n-\Delta t,x_i)}{(\Delta t)}\nonumber\\-
L_Dw(\tau_n-\Delta t,x_i)-L_Jw(\tau_n-\Delta t,x_i)+q>0\\&\longrightarrow&
\dfrac{\partial w}{\partial \tau}(\tau,x)-(\mathcal{L}_D+\mathcal{L}_J)w(\tau,x)+q\nonumber
\end{eqnarray}
as $\Delta t, \, \Delta x \longrightarrow (0,0)$, \, $(\tau_n,x_i)\longrightarrow(\tau,x)$, uniformly on $(0,T[\times O$. Therefore for any sufficiently small $\Delta>0$, for all $\Delta t, \, \Delta x \leq \Delta$, we have
\begin{eqnarray}\label{l3}
	\tilde{w}(\tau_n,x_i)\geq F[\tilde{w}(\tau_n-\Delta t,.)](x_i),\;\; \forall n\leq 1, \, \forall i\in{0,...,N}).
\end{eqnarray}  	
Combining \ref{l1}, \ref{l2} and \ref{l3}, show that function $\tilde{w}$ is supersolution of \ref{rfds}. Indeed, Lemma \ref{lemma1} implies that
\begin{eqnarray*}
u(\tau_n,x_i)\leq \tilde{w}(\tau_n,x_i)+q(1+T)<w(\tau_n,x_i)+\epsilon,      \;\;\forall n\geq 0, \; \forall i\in\mathbb{Z},
\end{eqnarray*}
which is the desired property. The lower bound $z(\tau_n,x_i)-\epsilon $ can be proved in the same manner and then completes the proof.
\end{proof}
Following Lemma \ref{l1} and Lemma \ref{l2}, we have the following main Lemma
\begin{lemma}\label{lemma3}
Let $\underline{u}$ and $\overline{u}$ be the function define by \ref{supsub}. For any smooth supersolution $w(\tau,x)$ and any subsolution $z(\tau,x)$ of the problem \ref{pide}, we have for $(\tau,x)\in[0,T]\times O$,
\begin{eqnarray}\label{rl3}
z(\tau,x)\leq \underline{u}(\tau,x)\leq\overline{u}(\tau,x)\leq w(\tau,x).
\end{eqnarray}
\end{lemma}
 \begin{proof}
 By the definition of upper and lower limits, Lemma \ref{lemma2} implies desired property.
 \end{proof}
 After giving some properties needed we can start the proof of convergence (i.e. showed that $\underline{u}=\overline{u}=u$). If $\overline{H}, \,\underline{H}$ are smoothness functions on $\mathbb{R}$ such that $\underline{H}\leq H\leq \overline{H}$, then $w(\tau,x)=\E[\overline{H}(Y^{x}_{\tau})]$ and $z(\tau,x)=\E[\underline{H}(Y^{x}_{\tau})]$ are respectively a supersolution and a subsolution of the Cauchy problem \ref{pide}. From Lemma \ref{lemma3} we obtain \ref{rl3}. Notice that If $w(\tau,x)-u(\tau,x), \; u(\tau,x)-z(\tau,x)$ could be made small this would imply that $\lim\limits_{(\Delta t,\Delta x)\rightarrow(0,0) \!(\tau_n,x_i)\rightarrow(\tau,x)}u^{(\Delta t,\Delta x)}(\tau_n,x_i)=u(\tau,x)$. Indeed, it remains to construct appropriate smooth approximations $\overline{H}$ and $\underline{H}$.\\
 Let $\zeta_1,....,\zeta_I$ be the discontinuity points of $H$. We suppose that the jumps of $H$ are bounded by $\delta$. 
Given $\epsilon>0$ and $\overline{H}, \,\underline{H}$ smooth functions that satisfied the following relations
\begin{eqnarray*}
\underline{H}(x)\leq H\leq\overline{H}(x)&& \;\;\;\;\forall x\in\mathbb{R},\\
\mid\underline{H}(x)-\overline{H}(x)\mid&\leq&\delta\;\;\;   \forall x\in\bigcup\limits_{j=1}^{I}(\zeta_j-\epsilon,\zeta_j+\epsilon),\\
\mid\underline{H}(x)-\overline{H}(x)\mid&\leq&\epsilon \;\;\;  \forall x\notin\bigcup\limits_{j=1}^{I}(\zeta_j-\epsilon,\zeta_j+\epsilon).\\
\end{eqnarray*}
We have
\begin{eqnarray}
w(\tau,x)-z(\tau,x)&=&\E[\overline{H}(Y_{\tau}^{x})-\underline{H}(Y_{\tau}^{x})]\nonumber\\
&\leq& \delta\Q(Y_{\tau}^{x}\in\bigcup\limits_{j=1}^{I}(\zeta_j-\epsilon,\zeta_j+\epsilon))+\epsilon\Q(Y_{\tau}^{x}\notin\bigcup\limits_{j=1}^{I}(\zeta_j-\epsilon,\zeta_j+\epsilon))\\
&\leq&\delta\Q(Y_{\tau}^{x}\in\bigcup\limits_{j=1}^{I}(\zeta_j-\epsilon,\zeta_j+\epsilon))+\epsilon.
\end{eqnarray}
Noting that $\bigcap\limits_{\epsilon>0}\{ Y_{\tau}^{x}\in\bigcup\limits_{j=1}^{I}(\zeta_j-\epsilon,\zeta_j+\epsilon)\}=\{Y_{\tau}^{x}\in\{\zeta_1,....,\zeta_I\}\}$. Since $Y_{\tau}^{x}$
has an absolutely continuous distribution, so we have $\Q(\{Y_{\tau}^{x}\in\{\zeta_1,....,\zeta_I\}\})=0$. Consequently
$\Q(Y_{\tau}^{x}\in\bigcup\limits_{j=1}^{I}(\zeta_j-\epsilon,\zeta_j+\epsilon))\longrightarrow0$ as $\epsilon\longrightarrow0$.\\ Therefore $w(\tau,x)-z(\tau,x)\longrightarrow0$ as $\epsilon\longrightarrow0$ and the inequalities $z(\tau,x)\leq u(\tau,x)\leq w(\tau,x) $ together with Lemma \ref{lemma3} implies desired result which completes the proof.  
\end{proof}
\begin{remark}
For $\tau=0$ the scheme does not converge to the initial condition at the discontinuous points of $H$. This is due to the fact that
$\Q(Y_{\tau}^{x}\in\bigcup\limits_{j=1}^{I}(\zeta_j-\epsilon,\zeta_j+\epsilon))\longrightarrow\Q(S_0e^{x}\in\{\zeta_1,....,\zeta_I\})=\mathds{1}_{\{x\in\{\ln(\frac{\zeta_1}{S_0}),...,\ln(\frac{\zeta_I}{S_0})\}\}}$.
However, this has no practical interest since it is not important to compute the solution numerically at $\tau=0$.
\end{remark}

\section{Numerical Results}
In this section we discuss to the details of the implementation of our schemas and present numerical results and some interpretation.

 Before started simulation parameters scheme are take as follow
\begin{table}[!ht]
	\caption{scheme parameters }
	\begin{tabular}{ccccc}
		\hline\noalign{\smallskip}
		$T$ &$M$&$N$&$A_l$ &$A_r$\\
		\noalign{\smallskip}\hline\noalign{\smallskip}
		$1$&$100$&$175$&$-0.096$&$0.079$ \\
		\noalign{\smallskip}\hline
	\end{tabular}
\end{table}
 the parameters used to implement the following results are taking independent to time as follow. 
 
\begin{table}[!ht]
	\caption{Model parameters }
	\label{tab:1}     
	\begin{tabular}{ccccc}
		\hline\noalign{\smallskip}
		figure & model &$r$&Strike&Product\\
		\noalign{\smallskip}\hline\noalign{\smallskip}
		\ref{fig:1}, \ref{fig:2}, \ref{fig:3}, \ref{fig:4} & $\alpha=0.015$  $\beta=0.4$  $\sigma_J=0.5$  $\ell=1.5$  $\sigma=0.5$  $S_0=50$&$0.04$&$K=45$&Call \\
		\noalign{\smallskip}\hline
	\end{tabular}
\end{table}
\begin{figure}[!ht]
\includegraphics[width=7cm]{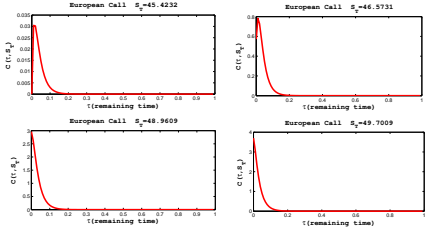}
\caption{Call price for four diffrents values of spot price at maturity versus remaining time to maturity}
\label{fig:1}      
\end{figure}
\begin{figure}[!ht]
\includegraphics[width=7cm]{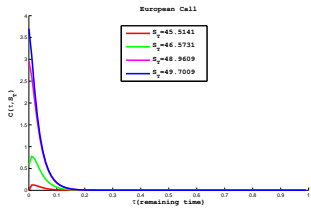}
\caption{Comparison of call values for four diffrents values of spot price at maturity versus remaining time to maturity }
\label{fig:2}       
\end{figure}
Figure \ref{fig:1} plots call option for different spot price of the underlying at maturity versus remaining time. Note that, as remaining time increase the mean-reverting and price-cap effect in all four case cause the call prices converge to zero under the set parameters gived above.
Figure \ref{fig:2} give comparison between illustrative curve of call for four different electricity spot price at maturity. This two figures illustrate that for different initials conditions call option prices converge verse to the same value. From this we can thus, say that different properties shown theoretically are effective under the established conditions.
\begin{figure}[!ht]
	\includegraphics[width=7cm]{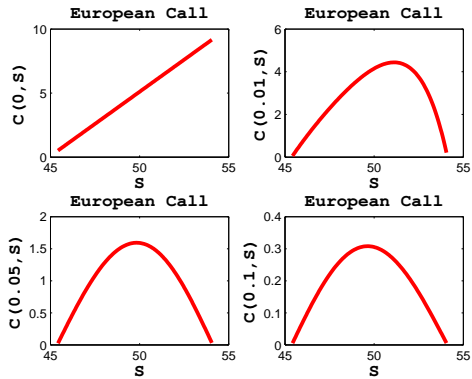}
	\caption{Call price for four different remaining time to maturity versus spot price}
	\label{fig:3}       
\end{figure}
\begin{center}
\begin{figure}[!ht]
	\includegraphics[width=7cm]{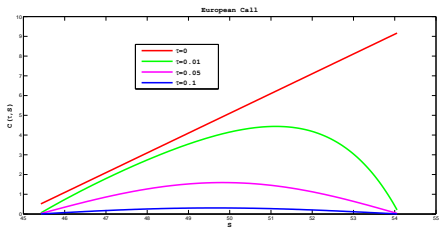}
	\caption{Comparison of call price for four different versus spot price }
	\label{fig:4}       
\end{figure}
Figure \ref{fig:3} and \ref{fig:4} illustrate a reality enough close to those in the classical financials markets. In the sens that in financial market the values of call option before the maturity evolved in the form of a curve which towards to the payoff line progressively and as we approach maturity.   
\end{center}
\begin{figure}[!ht]
	\includegraphics[width=7cm]{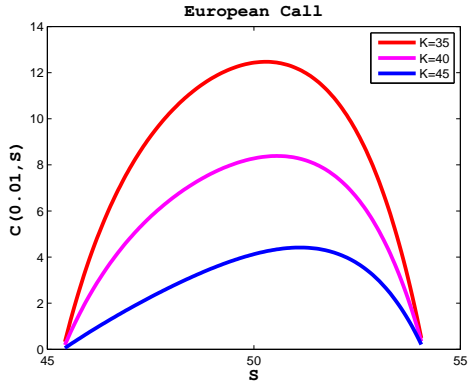}
	\caption{Call price for three diffrents values of strike versus spot price S, the other parameters is unchanged as in table \ref{tab:1}}
	\label{fig:5}      
\end{figure}
\begin{figure}[!ht]
	\includegraphics[width=7cm]{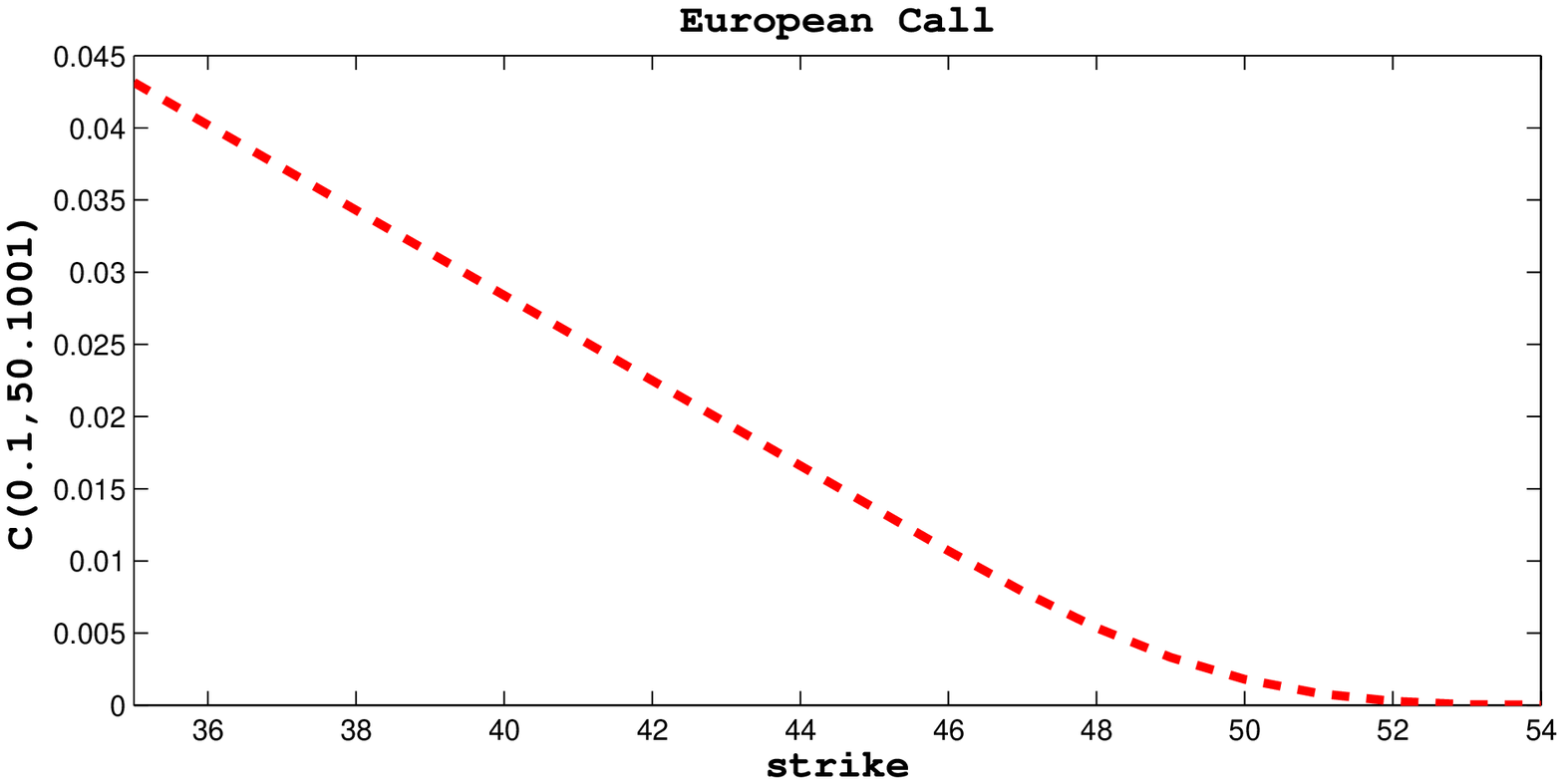}
	\caption{ Call price versus strike price K, the other parameters in unchange as in table \ref{tab:1}}
	\label{fig:6}      
\end{figure}
\begin{figure}[!ht]
	\includegraphics[width=7cm]{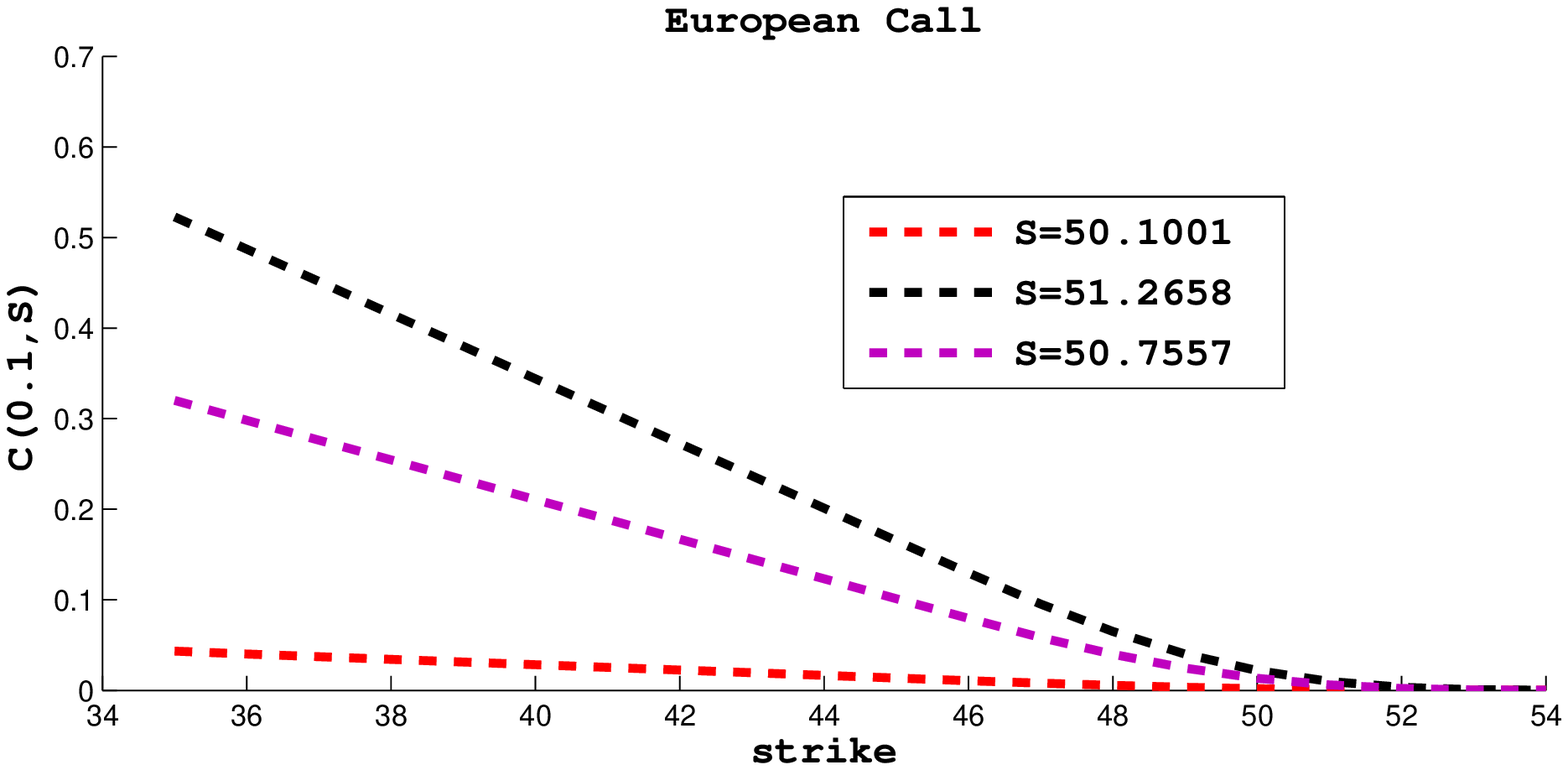}
	\caption{ Call price versus strike price K, the other parameters in unchange as in table \ref{tab:1}}
	\label{fig:7}      
\end{figure}
From figures \ref{fig:5}, \ref{fig:6} and \ref{fig:7} one can observed that the values of call decrease when strike price increase. This behaviour of the call values is from a risk management point of view what it is wished.    
\begin{figure}[!ht]
	\includegraphics[width=7cm]{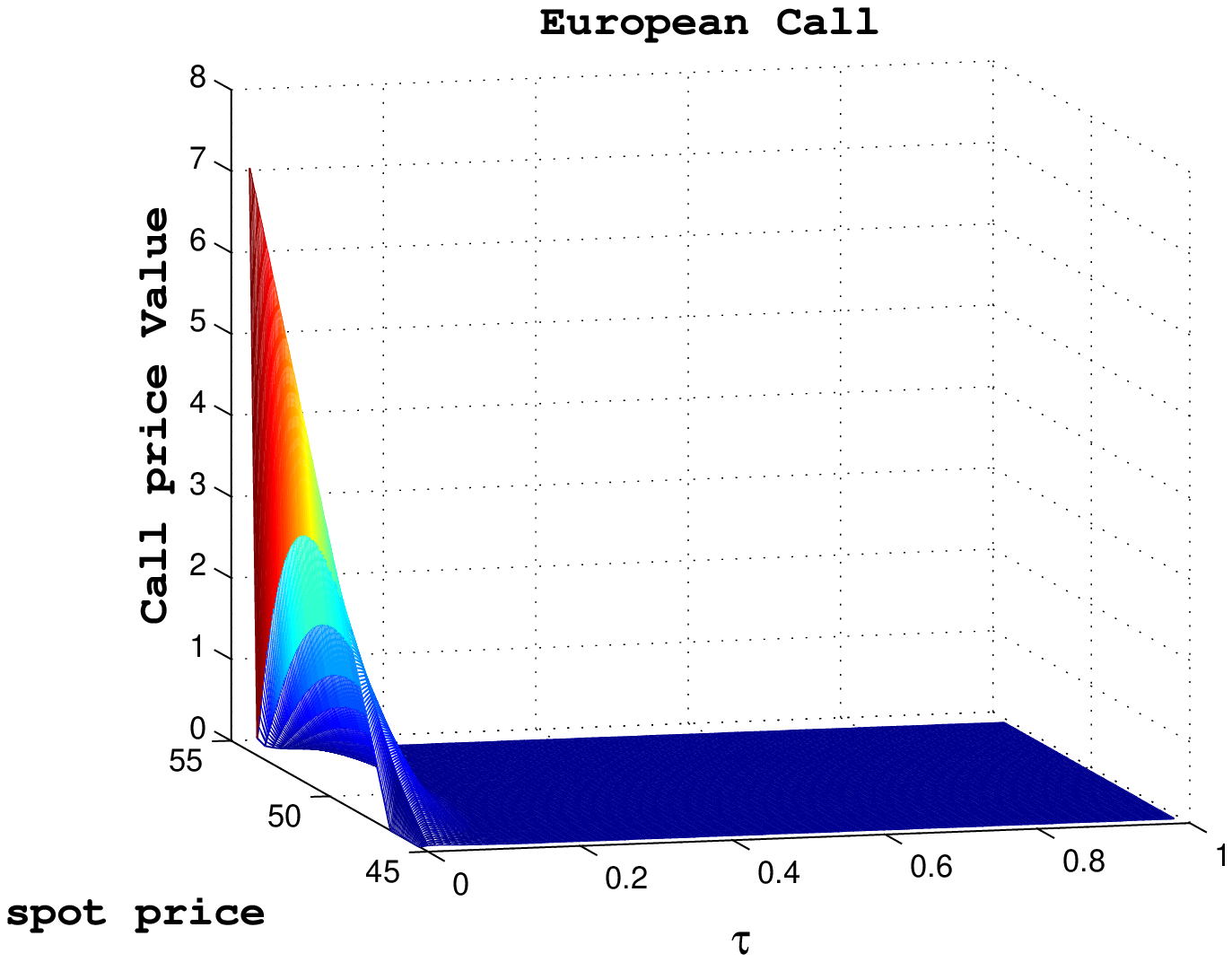}
	\caption{ Call price versus remaining time and spot price $\sigma_J=0.5$, $K=45$, the other parameters is unchanged as in table \ref{tab:1} }
	\label{fig:8}      
\end{figure}
\begin{figure}[!ht]
	\includegraphics[width=7cm]{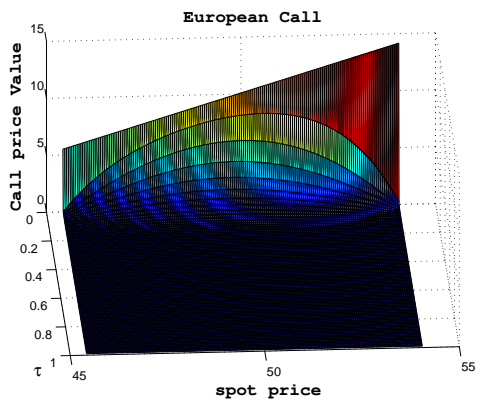}
	\caption{ Call price versus remaining time and spot price  $\sigma_J=2.5$ and $K=40$, the other parameters is unchanged as in table \ref{tab:1}}
	\label{fig:9}      
\end{figure}
Analysis plots of figures \ref{fig:8} and \ref{fig:9} which illustrate call option price as a function of remaining time and spot price one can said that for a large remaining time jumps effects are not perceptible and then can effect call. Whereas a small remaining time to maturity call prices increase suddenly which express the effect of jump. We must therefore say that the jump term which allows to take into account certain reality of the electricity market is not inconsiderable since that it impact on the call prices are quite noticeable.
\newpage
\section{Discussion and Conclusion}
The characterization of european options prices in terms of the classical solution, or, in general, in the terms of the viscosity solution of a PIDE allows the use of numerical methods to obtain efficient approximations of option values. This has been a centre of research in recent times in the case of exponential L\'{e}vy models with finite arrival or infinite arrival rate of jumps. Some authors use the finite difference method to approximate the PIDE solution (see \cite{alvarez}), while others like \cite{ekafinite} approximate viscosity solutions in the case of nonsmoothness of option prices. In both cases, success (in terms of efficient approximation) has been obtained. In this paper we used their approach to evaluate European call option when the underlying is electricity. The motivation behind our approach arose from the fact that the electricity prices model presented here, by hypothesis, possesses most of the properties (as in their case) of an exponential L\'{e}vy model, and the Markov process property. We focus on the pricing of call option because put option can be deduced from the put-call using parity formula. In the mathematical point of view, numerical results have confirmed the established theoretical results. In the finance point of view, numerical results present an interpretation which was coherent with some realities in the electricity market, when it is regulated under price cap. 

A limitation of this work is that it can be adapted only to option pricing with short maturity in a regulated market. In future work, we plan to study option pricing on future option.            
\section*{Acknowledgements}
We thank the African Center of Excellence in Technologies, Information and Communication ( CETIC ) which placed at our disposal its infrastructures. This helped us to improve conditions of work.


\begin{thebibliography}{99}
	\bibitem{ac}J. Acton and I. Vogelsang, \emph{Symposuim on Price Cap Regulation: Introduction}, Rand Journal of Economics, Vol.20, page 369, 1989.
	\bibitem{alvarez}O. Alvarez and A. Tourin, \emph{Viscosity solutions of nonlinear integro-differential equation}, Annales de l'Institut Henri Poincaré,13(3), pp. 293-317, 1996.
	\bibitem{am}H. Amann, \emph{ Linear and Quasilinear Parabolic Problems: Abstract Linear Theory},
	Birkh\"{a}user	Vol. 1., 1995.
	\bibitem{bs}F. Black and M. Scholes, \emph{The Pricing of options and Corporate Liabilities}, The journal of Political Economy, Vol. 81, No. 3, pp. 634-654, 1973.
	\bibitem{bates}D. Bates, \emph{Jumps and stochastic volatility: Exchange rate processes implicit in Deutchemark option}, Review of Financial Studies, 9, 69-108, 1996.
	\bibitem{bal} G. Bales, R. Buckdahn, and E. Pardoux, \emph{Backward Stochastic Differential Equations and Integral-Partial Differential Equations}, Stochast. Stochast. Reports 60, 57-83, 1997. 
	\bibitem{gb}G. Barles and P. E. Souganidis, \emph{Convergence of Approximation Schemes for Fully Nonlinear Second Order Equations}, Asymptotic Anal., pp. 271-283, 1991.
	\bibitem{ca} A. Cartea, M. Figueroa, \emph{Pricing in electricity markets: A mean reverting jump diffusion model with seasonality.},  Applied Mathematical Finance 12(4), 313-335, 2005.
	\bibitem{carrmadan}P. Carr and D. B. Madan, \emph{Option Valuation using the Fast Fourier Transform}, Journal of Computational Finance, 2, $N^{0}$.4, Summer 61-73, 1999.
	\bibitem{chiarella}Chiarella, carl, Ziogas and Andrew, \emph{American call options under jump-diffusion processes- A Fourier transform approach}, Appl. Math. Finance, Vol. 16, no. 1, pp37-79, 2009.
	\bibitem{eka}R. Cont, Ekaterina voltchkova,  \emph{Integro-differential equation for option prices in exponential Lévy models}, Finance and Stochastics,Springer, 2005.
	\bibitem{ekafinite}R. Cont, Ekaterina voltchkova,  \emph{A Finite difference scheme fo option pricing in jump diffusion and exponential L\'{e}vy models}, Siam J. Numer. Anal. Vol. 42, No. pp.1596-1626, 2005.
	\bibitem{lions}M. G. Crandall, H. Ishi, and P. L. Lions, \emph{User's Guide to Viscosity Solutions of Second Oder Partial Differential Equations}, New Series of the American Mathematical Society, Vol. 27, Number 1, 1992. 
	\bibitem{deng}S. Deng, \emph{Pricing electricity derivatives under alternative stochastic spot price models}, Proceedings of the 33rd Hawaii Internatioal Conference on system Sciences, 2000.
	\bibitem{ali}A. Etheridge, \emph{ A Course in Financial Calculus}, Cambridge University Press,first edition, 2002.
	\bibitem{en}ENMAX, \emph{Power Corporation 2007-2016 formula Based Ratemaking}, Alberto Utilities Commission, Decision N0. 2009-035, 2009.
	\bibitem{hirsa}A. Hirsa and D.B. Madan, \emph{Pricing American option under Variance Gamma}, Journal of Computational Finance, vol. 7, pp. 63-80, 2004.
	\bibitem{jacod}J. Jacod and A. N. Shiryaev, \emph{ Limit Theorems for Stochastic Processes}, Springer, Berlin, 2nd edition, 2002.
	\bibitem{js}P.L. Joskow, \emph{Incentive Regulation and its Application to Electricity Networks}, Review of Network Economics, vol 7,Issue 4, 2008.
	\bibitem{jw}P.L. Joskow, \emph{Incentive Regulation in Theory and Pratice:  Electricity Distribution And Transmission Network}, Prepared for the National bureau of Economic Reseach on Economic Conference Regulation, Working Paper 05-18, Brooking Joint Center for Regulatory Studies, Washington DC, 2005.
	\bibitem{la}J. J. Laffont, J. Tirole, \emph{ A Theory of Incentives in Procurement and
		Regulation}, MIT Press, Cambridge, MA, 1993.
	\bibitem{lewis}A. Lewis, \emph{A Sample option Formula for General Jump-Diffusion and other exponential L\'{e}vy processes}, www.optioncity.net, Working paper, 2001.
	\bibitem{li}S. Littlechild, \emph{Regulation of British Telecommunication'Profitability. Tech. Rep.}, Department of Industry, London, 1983.
	\bibitem{chang}D. B. Madan, P. Carr and E. C. Chang, \emph{The Variance gamma process and option pricing}, European Finance Review, 2, 79-105.
	\bibitem{rm}C. Merton, \emph{Continuous-Time Finance}, Harvard University, First revised edition, 2001.
	\bibitem{david}D. Nualart and W. Schoutens, \emph{Backward stochastic differential equations and Feynman-Kac formula for L\'{e}vy processes, with applications in finance}, Bernoulli 7(5), 761-776, 2001. 
	\bibitem{polipovic}Polipovic Dragana, \emph{Energy Risk: Valuing and Managing Energy Derivatives} The Mac-Graw Hill Companies, 2007.
	\bibitem{chap}C.Rama, P. Tankov, \emph{Financial modelling with jump processes}, Chapman and Hall/CRC press, 2003.
	\bibitem{sato}K. Sato, \emph{ Lévy Processes and Infinitely Divisible Distributions}, Cambridge University Press, 1999.
	\bibitem{scott}L. Scott, \emph{Pricing stock in a jump-diffusion model with stochastic volatility and interest rate: Application of Fourier inversion methods}, Mathematical Finance, 7, 413-426, 1997.
	\bibitem{wong}H. Wong and P. Guan, \emph{An FFT network for L\'{e}vy option pricing}, Journal of Banking and Finance, Vol. 35, no. 4, pp. 988-999, 2011.
\end{thebibliography}
\end{document}